\documentclass[aps,prd,preprint,superscriptaddress,showpacs,floatfix,nobibnotes]{revtex4-1}

%a4paper
\usepackage[latin1]{inputenc}
\usepackage{graphicx}
\usepackage{amsmath}
\usepackage{geometry}   % See geometry.pdf to learn the layout options. There are lots.
\usepackage{amsfonts}
\usepackage{amssymb}
\usepackage{amsthm}
\usepackage{slashed}
\usepackage{mathrsfs}
\usepackage{enumerate}
\usepackage{color}
\usepackage[yyyymmdd,hhmmss]{datetime}

\numberwithin{equation}{section}

\newcommand{\bea}{\begin{eqnarray}}
\newcommand{\eea}{\end{eqnarray}}

\newcommand{\TT}{\sigma}

\newcommand{\BbbR}{\mathbb{R}}

\newcommand{\BbbN}{\mathbb{N}}

\DeclareMathOperator{\arccot}{arccot}
\DeclareMathOperator{\Ci}{Ci}
\DeclareMathOperator{\Cin}{Cin}
\DeclareMathOperator{\Si}{Si}

\DeclareMathOperator{\Realpart}{Re}
\DeclareMathOperator{\Imagpart}{Im}

\theoremstyle{plain}
\newtheorem{thm}{Theorem}[section]
\newtheorem{lem}[thm]{Lemma}
\newtheorem{prop}[thm]{Proposition}

\parskip = 0.05in
%\parindent = 0.0in
%\arraycolsep 2pt

%%%%%%%%%%%%%%%%%%%%%%%%%%%%%%%%%%%%%%%%%%%%%%%%%%

\begin{document}
\title{Smooth and sharp creation of a pointlike source\\ 
for a $(3+1)$-dimensional quantum field}

\author{L.~J. Zhou}
\email[]{zhoulingjunjeff@gmail.com} 
\affiliation{Department of Physics, Brandon University, Brandon, Manitoba, R7A 6A9 Canada}
\affiliation{Winnipeg Institute for Theoretical Physics, Winnipeg, Manitoba}
\affiliation{Department of Physics, University of Winnipeg, Winnipeg, Manitoba, R3B 2E9 Canada}
\affiliation{Department of Physics and Astronomy, University of Manitoba, Winnipeg, Manitoba, R3T 2N2 Canada}

\author{Margaret E. Carrington}
\email[]{carrington@brandonu.ca} 
\affiliation{Department of Physics, Brandon University, Brandon, Manitoba, R7A 6A9 Canada}
\affiliation{Winnipeg Institute for Theoretical Physics, Winnipeg, Manitoba}

\author{Gabor Kunstatter}
\email[]{gkunstatter@uwinnipeg.ca} 
\affiliation{Winnipeg Institute for Theoretical Physics, Winnipeg, Manitoba}
\affiliation{Department of Physics, University of Winnipeg, Winnipeg, Manitoba, R3B 2E9 Canada}

\author{Jorma Louko}
\email[]{jorma.louko@nottingham.ac.uk} 
\affiliation{School of Mathematical Sciences, University of Nottingham, Nottingham NG7 2RD, United Kingdom}

\date{ {Revised  {March} 2017}}

% \date{\today}

\begin{abstract}
We analyse the smooth and sharp creation of a pointlike source for a 
quantised massless scalar field in 
$(3+1)$-dimensional Minkowski spacetime, as a model for
the breakdown of correlations that has been proposed to occur 
at the horizon of an evaporating black hole. 
The creation is implemented by a time-dependent 
self-adjointness parameter at the excised spatial origin. 
In a smooth creation, the renormalised energy density 
$\langle T_{00} \rangle$ 
is well defined away from the source, 
but it is unbounded both above and below: 
the outgoing pulse contains an infinite negative energy, 
while a cloud of infinite positive energy lingers near the fully-formed source. 
In the sharp creation limit, 
$\langle T_{00} \rangle$ 
diverges everywhere in the  {timelike future} of the creation event, 
and so does the response of an Unruh-DeWitt detector that operates 
in the  {timelike} future of the creation event. 
The source creation is 
 {significantly more singular than the corresponding 
process in $1+1$ dimensions, analysed previously, 
and} it may be sufficiently singular 
to break quantum correlations as proposed in a black hole spacetime. 
\end{abstract}

\pacs{}

\normalsize
\maketitle

\normalsize

\section{Introduction}

In quantum field theory, it has been long known that a time dependent boundary 
condition or a time dependent metric can create particles and energy flows. 
Parker's pioneering work showed that 
a Klein-Gordon field on an expanding cosmological spacetime 
undergoes particle creation~\cite{Parker:1969au}. 
Moore showed that particle creation can be induced by varying 
the length of a cavity~\cite{Moore70}, 
while Candelas and Deutsch showed that even a single accelerating mirror can 
induce a flux of particles and energy~\cite{Candelas:1977zza}; 
this phenomenon is now known as the Dynamical (or non-stationary) Casimir Effect, 
and it was observed in 2011 using a photon analogue system~\cite{Wilson11}. 
The most celebrated example is Hawking's prediction of 
black hole radiation~\cite{hawking}, whose observation in 
analogue quantum systems may be at the threshold 
of current technology~\cite{Belgiorno:2010wn,Steinhauer:2015saa}.  

In order to reconcile the thermal character of Hawking radiation 
with fundamental unitarity of quantum theory, it has been proposed 
\cite{braunstein-et-al,Mathur:2009hf,Almheiri:2012rt,Susskind:2013tg,Almheiri:2013hfa,Hutchinson:2013kka,Harlow:2014yka}
that the horizon of a radiating black hole 
could be more singular than the conventional picture of quantum 
fields on the classical 
black hole spacetime 
suggests~\cite{birrell-davies,wald-smallbook,parker-toms-book}. 
While detailed modelling of this possible singularity 
remains elusive, 
the key proposed feature is that the singularity 
should break down correlations between the two sides of the horizon. 
A~context in which such breaking of correlations can be studied 
is quantum field theory on a fixed background spacetime. 
One way to do this is
to write down by hand a quantum state in which the correlations are 
absent~\cite{Louko:2014aba,Martin-Martinez:2015dja}. Another is 
to allow an impermeable wall to develop where 
initially there 
was none~\cite{unruh-seoultalk,Brown:2014qna,Brown:2015yma,Harada:2016kkq}. 
The purpose of the present paper is to improve 
the understanding of the latter scenario. 

When the impermeable wall is inserted quickly, a 
surprising feature emerges: 
for a massless scalar field in 
$1+1$ dimensions, 
the energy transmitted into the field diverges in 
the limit of rapid wall creation, 
but the response of an Unruh-DeWitt detector 
\cite{Unruh:1976db,DeWitt:1979}
crossing this pulse of diverging 
energy remains finite~\cite{Brown:2015yma}. 
 {The finite detector response} 
casts doubt on the ability of wall creation, however rapid, 
to break down quantum correlations sufficiently strongly to 
save unitarity in an evolving black hole spacetime. 
One limitation of the analysis in \cite{Brown:2015yma} 
is however that it was done in $1+1$ dimensions. 
Quantum fields generally become more 
singular as the spacetime dimension increases: 
would the conclusions in $3+1$ dimensions be similar? 
A second limitation is that the analysis in 
\cite{Brown:2015yma} relied on an infrared cutoff to 
eliminate the infrared ambiguity that the massless scalar field has in 
$1+1$ dimensions. Could the results in \cite{Brown:2015yma}
be an artifact of the $(1+1)$-dimensional 
infrared sickness, 
with no counterpart in $3+1$ dimensions? 

In this paper we take a first step towards 
adapting the wall creation analysis of \cite{Brown:2015yma} 
to $3+1$ dimensions, 
and answering these questions. 
We consider a massless scalar field in $(3+1)$-dimensional 
Minkowski spacetime, and we introduce at the spatial origin 
a time-dependent boundary condition that 
interpolates, over a finite interval of time, 
between ordinary Minkowski dynamics and
a Dirichlet-type condition. 
As the boundary condition is introduced at 
just one spatial point, 
the physical interpretation is now not the 
smooth creation of a wall but the 
smooth creation of a pointlike source. 
We then ask what happens to the energy transmitted into the 
field and to the response of an Unruh-DeWitt detector 
in the limit of rapid source creation. 
The answers turn out to have some similarities with the $(1+1)$-dimensional 
analysis of \cite{Brown:2015yma} but also significant 
differences. A~technical difference is that in $3+1$ dimensions there is 
no infrared ambiguity, and no infrared cutoff is needed. 
 {A~difference in physically observable quantities is
that in $3+1$ dimensions both the field's energy density
and the detector's response are more singular.}

First, we consider the energy. 
While the renormalised energy density 
$\langle T_{00}\rangle$ is well defined everywhere away from the source, 
it is bounded neither above nor below. 
In the outgoing pulse generated by the evolving source, 
$\langle T_{00}\rangle$ is unbounded below 
immediately to the future of the light cone of the point 
where the boundary condition starts to change, 
and the total energy in the pulse is negative infinity. 
After the pulse has gone, 
$\langle T_{00}\rangle$ is nonzero, and it diverges 
at $r\to0$ proportionally to $-(\ln r)/r^4$: 
a cloud of positive energy lingers near the source after the 
source is fully formed, and the total energy in this cloud is positive infinity. 
Further, at a fixed~$r$, $\langle T_{00} \rangle$ is not static, and it 
diverges at $t\to\infty$ proportionally to $\ln t$. 
In the limit of rapid source creation, 
$\langle T_{00} \rangle$ diverges everywhere 
in the  {timelike} future of the creation event. 
The source creation hence leaves in the late 
time region a large energetic memory. This memory 
has no counterpart in the $(1+1)$-dimensional 
analysis of~\cite{Brown:2015yma}. 

We note that the firewall in both the previous paper \cite{Brown:2015yma} 
and the present work is not in fact modelled by the wall or point source, 
respectively, where the boundary conditions are specified. 
Instead, these serve as the source of the firewall which itself is 
modelled by the resulting outgoing null shell of energy. 
It is for this reason that it is important to calculate the 
response of a detector passing through the outgoing shell of energy 
(i.e.\ firewall), as we do in Section~\ref{sec:UDWdetector}. 
In particular, we consider the response of a static Unruh-DeWitt detector. 
We find that the response of a detector that 
operates only in the late time region mimics 
$\langle T_{00} \rangle$ closely, 
both in the late time 
limit and in the limit of rapid 
source creation: 
in both limits, the response has a logarithmic divergence. 
We have not considered in detail the response 
of a detector that goes through the pulse emanating from the 
changing boundary condition, but the behaviour 
in the post-pulse region is already sufficient to 
establish that the response does not remain finite 
in the limit of rapid source creation.

We conclude that the rapid creation of a source makes 
the $(3+1)$-dimensional field significantly 
more singular than the corresponding event in $1+1$ dimensions; 
in particular, the response of an Unruh-DeWitt detector 
diverges in the rapid creation limit. 
These results suggest that a source creation may be able to 
model the breaking of quantum correlations 
in the way that has been proposed to happen in an evolving black 
hole spacetime 
\cite{braunstein-et-al,Mathur:2009hf,Almheiri:2012rt,Susskind:2013tg,Almheiri:2013hfa,Hutchinson:2013kka,Harlow:2014yka}. 
The persistence of large late time effects is perhaps particularly 
reminiscent of the energetic curtain scenario proposed in~\cite{braunstein-et-al}.

We begin in Section 
\ref{sec:classical}
by setting up the classical dynamics of the scalar 
field under the evolving boundary condition at the spatial origin. 
Section \ref{sec:quantised} introduces the 
quantised field and evaluates~$\langle T_{00} \rangle$.
The response of an Unruh-DeWitt detector 
is considered in Section~\ref{sec:UDWdetector}. 
Section \ref{sec:summary} gives a brief summary and discussion. 
Technical material is relegated to five appendices. 

Our metric signature is mostly minus. 
Overline denotes complex conjugation. 
A~continuous function of a real variable 
is said to be~$C^0$, 
a function that is $n \in \BbbN = \{1,2,\ldots\}$
times continuously differentiable is said to be $C^n$, 
and a function that has all derivatives 
is said to be~$C^\infty$, or smooth. 
We work in geometric units in which $\hbar = c = 1$. 

\section{Classical field\label{sec:classical}}

\subsection{Field equation and boundary condition\label{subsec:field-eq}}

We consider a real massless scalar field $\phi$ in 
$(3+1)$-dimensional Minkowski spacetime from which the spatial origin has been excised. 
Writing the metric as 
\begin{align}
ds^2 = dt^2 - {(dx^1)}^2 - {(dx^2)}^2 - {(dx^3)}^2\,, 
\label{eq:Mink-metric}
\end{align}
the field equation is 
\begin{align}
(\partial^2_t-\nabla^2)\phi = 0\,,
\label{eq:fieldeq}
\end{align}
where $\nabla^2 = \partial_{x^1}^2 + \partial_{x^2}^2 + \partial_{x^3}^2$. 
The Klein-Gordon inner product evaluated on a constant $t$ hypersurface reads 
\begin{align}
(\phi_1, \phi_2)_{KG} 
= i \int dx^1\,dx^2\,dx^3 \, 
\bigl( \overline{\phi_1} \partial_t \phi_2 
- (\partial_t\overline{\phi_1})\phi_2 \bigr) 
\,. 
\label{eq:KG-innerproduct-cartesian}
\end{align}
In the spherical coordinates, defined by 
$(x^1, x^2, x^3) = (r\sin\theta \cos\varphi,  
r\sin\theta \sin\varphi, r\cos\theta)$, 
the metric reads 
\begin{align}
ds^2 = dt^2  - dr^2 
- r^2 \left( d\theta^2 + \sin^2 \! \theta \, d\varphi^2 \right) 
\label{eq:Mink-spherical-metric}
\end{align}
and the Klein-Gordon inner product reads 
\begin{align}
(\phi_1, \phi_2)_{KG} 
= 
i \int_0^\infty r^2 \, dr \int_{S^2} d\Omega \, 
\bigl( \overline{\phi_1} \partial_t \phi_2 - (\partial_t\overline{\phi_1})\phi_2 \bigr) 
\,,
\label{eq:KG-innerproduct-spherical}
\end{align}
where $d\Omega = \sin\theta\,d\theta \,d\varphi$ 
is the volume element on unit~$S^2$. The excised spatial origin is at $r=0$. 

To specify the dynamics, we need to define $\nabla^2$ at each $t$ as a self-adjoint operator. 
After decomposition into spherical harmonics, the only freedom is in the spherically symmetric sector, 
as discussed in Appendix~\ref{app:laplaceoperator}: writing 
\begin{align}
\phi(t,r) = \frac{f(t,r)}{\sqrt{4\pi} \, r}
\,, 
\label{eq:phi-scaledto-f}
\end{align}
the eigenfunctions of $\nabla^2$ must satisfy the boundary condition 
\begin{align}
\bigl(\cos\theta(t)\bigr)\lim_{r\to0} f(t,r)
= 
L \bigl(\sin\theta(t) \bigr) \lim_{r\to0} \partial_r f(t,r)
\,,
\label{eq:robin} 
\end{align}
where
$L$ is a positive constant of dimension length, 
introduced for dimensional convenience, 
and the prescribed function $\theta(t)$, taking values in 
$[0, \pi)$, specifies at each $t$ the self-adjoint 
extension of~$\nabla^2$. We denote this extension by 
$\Delta_{\theta(t)}$. 

$\Delta_0$ coincides with the unique self-adjoint 
extension of $\nabla^2$ on $L_2(\BbbR^3)$, yielding
usual scalar field dynamics on full Minkowski space. 
For $\theta\in(\pi/2,\pi)$, 
$\Delta_\theta$ has a positive proper eigenvalue, 
which on quantisation would give a tachyonic instability. 
We therefore assume $\theta\in[0,\pi/2]$, 
in which case the spectrum of $\Delta_\theta$ consists of the negative continuum. 

We specialise to a $\theta(t)$ that interpolates between $\theta=0$ and $\theta=\pi/2$ 
over a finite interval of time. We may parametrise $\theta(t)$ as 
\begin{align}
\theta(t) = 
\begin{cases}
0 & \text{for $t \le 0$}
\ , 
\\
\arccot \! \left[\lambda L \cot\bigl(h(\lambda t)\bigr)\right] 
& \text{for $0 < t < \lambda^{-1}$}
\ , 
\\
\pi/2 & \text{for $t \ge  \lambda^{-1}$}
\ , 
\end{cases}
\label{eq:theta-param}
\end{align}
where $\lambda$ is a positive constant of dimension inverse length and 
$h: \BbbR \to \BbbR$ 
is a smooth function such that 
\begin{subequations}
\label{eq:theta-scaled}
\begin{alignat}{2}
& h(y) = 0 &\quad& \text{for $y\le0$}
\ ,  
\\
& 0 < h(y) < \pi/2 && \text{for $0 < y < 1$}
\ , 
\\
& h(y) = \pi/2 && \text{for $y\ge1$}
\ . 
\end{alignat}
\end{subequations}
Over the interval $0<t< \lambda^{-1}$, the 
boundary condition \eqref{eq:robin} 
then reads 
\begin{align}
\lim_{r\to0}
\frac{\partial_r f(t,r)}{f(t,r)} 
= 
\lambda \cot\bigl(h(\lambda t)\bigr)
\ . 
\label{eq:scaled-BC}
\end{align}
In words, this parametrisation means that 
the boundary condition interpolation takes place over time 
$\lambda^{-1}$ while the interpolation profile is determined by the 
dimensionless 
 {function~$h(y)$.
The limit of rapid
interpolation with fixed profile is that of $\lambda\to\infty$.}

\subsection{Mode functions\label{subsec:classical-modefunctions}}

As preparation for quantisation, we shall write down 
the mode solutions that reduce to the usual Minkowski modes for $t\le0$. 
As noted above, we need consider only the spherically symmetric sector. 

We work in the radial null coordinates $u := t-r$ and $v := t+r$, 
in which $t = (v+u)/2$ and $r = (v-u)/2$. 
The metric \eqref{eq:Mink-spherical-metric} becomes 
\begin{align}
ds^2 = du\,dv 
- \tfrac14 (v-u)^2 
\left( d\theta^2 + \sin^2 \! \theta \, d\varphi^2 \right) 
\,. 
\label{eq:Mink-null-metric}
\end{align}
Taking $\phi$ to be spherically symmetric, the field equation 
\eqref{eq:fieldeq} becomes 
\begin{align}
\partial_u \partial_v (r\phi) =0 
\,.   
\end{align}
We hence seek mode solutions with the ansatz 
\begin{align}
\phi_k = \frac{U_k}{\sqrt{4\pi} \, r}
\,,
\label{eq:phisubk-modes}
\end{align}
where 
\begin{align}
U_k (u,v) = \frac{1}{\sqrt{4\pi k}} \left[ e^{-ikv}+E_k(u) \right] 
\,, 
\label{eq:Umode-ansatz}
\end{align}
$k>0$, and $E_k$ is to be found. As any choice for $E_k$ satisfies the wave equation, 
the task is to determine $E_k$ so that the boundary condition \eqref{eq:robin} 
is satisfied for all $t$ and the usual Minkowski modes are obtained for $t\le0$. 

Substituting \eqref{eq:Umode-ansatz} in the boundary condition 
\eqref{eq:robin} gives for $E_k$ the ordinary differential equation 
\begin{align}
L \sin\bigl(\theta(t)\bigr) 
\frac{d}{dt}
\! \left[
e^{-ikt} - E_k(t)
\right]
&= \cos\bigl(\theta(t)\bigr) 
\! \left[
e^{-ikt} + E_k(t)
\right]
\ . 
\label{eq:E-ev-eq}
\end{align}
Writing 
\begin{align}
E_k(u) = R_{k/\lambda}(\lambda u)
\label{eq:E-intermsof-R}
\end{align}
and using~\eqref{eq:theta-param}, 
\eqref{eq:E-ev-eq} takes the dimensionless form 
\begin{align}
\sin\bigl(h(y)\bigr) 
\frac{d}{dy}
\! \left[
e^{-iKy} - R_K(y)
\right]
&= \cos\bigl(h(y)\bigr) 
\! \left[
e^{-iKy} + R_K(y)
\right]
\ , 
\label{eq:R-ev-eq}
\end{align}
where $K = k/\lambda>0$ is the dimensionless frequency and 
$y = \lambda u$. 

To solve \eqref{eq:R-ev-eq}, we introduce the auxiliary function 
\begin{align}
B(y) = 
\begin{cases}
{\displaystyle{0}}
& \text{for $y \le 0$}
\ , 
\\
{\displaystyle{\exp\left(-\int_{y}^{1}\cot\bigl(h(z)\bigr) \, dz \right)}}
& \text{for $0 < y < 1$}
\ , 
\\[1ex]
{\displaystyle{1}}
& \text{for $y \ge 1$}
\ . 
\end{cases}
\label{eq:B-sol}
\end{align}
$B(y)$ is everywhere smooth: smoothness at $y=1$ follows 
from the smoothness of $h(z)$ near $z=1$, 
and smoothness at $y=0$ is shown 
in Appendix~\ref{app:junction-differentiability}. 
For $y>0$, $B(y)$ satisfies 
\begin{align}
\frac{B'(y)}{B(y)}
= \cot\bigl(h(y)\bigr) 
\,.
\label{eq:Bscaled-diffeq}
\end{align}
It follows that the solution to \eqref{eq:R-ev-eq} is 
\begin{align}
R_K(y) = 
\begin{cases}
{\displaystyle{- e^{-iK y}}}
& \text{for $y \le 0$}
\ , 
\\
{\displaystyle{- e^{-iK y} 
- \frac{2 i K }{B(y)}
\int_0^{y} {B}(z) \, e^{-iK z} \, dz}}
& \text{for $0 < y < \infty$}
\ . 
\end{cases}
\label{eq:R-sol-alt}
\end{align}
From \eqref{eq:R-sol-alt}
and the smoothness of $B$ we see that 
$R_K(y)$ is smooth everywhere except possibly at $y=0$, and 
we verify in Appendix \ref{app:junction-differentiability} 
that $R_K(y)$ is $C^{25}$ at $y=0$. It follows that 
the mode functions are smooth everywhere except 
possibly at $r=t$, and they 
are at least $C^{25}$ at $r=t$. 

An alternative expression for $R_K(y)$ is 
\begin{align}
R_K(y) = 
\begin{cases}
{\displaystyle{- e^{-iK y}}}
& \text{for $y \le 0$}
\ , 
\\
{\displaystyle{e^{-iK y} - \frac{2}{B(y)}
\int_0^{y} {B'}(z) \, e^{-iK z} \, dz}}
& \text{for $0 < y < 1$}
\ , 
\\[1ex]
{\displaystyle{e^{-iK y} - 2 C_K}}
& \text{for $y \ge 1$}
\ , 
\end{cases}
\label{eq:R-sol-full}
\end{align}
where 
\begin{align}
C_K = 
\int^{1}_0 B'(z) \, e^{-iKz} \, dz
\,.
\label{eq:CsubK-def}
\end{align}
At $u\le0$ and $u\ge \lambda^{-1}$, the mode functions 
$\phi_k$ \eqref{eq:phisubk-modes} hence 
reduce respectively to 
\begin{align}
\phi_k(t,r) = 
\begin{cases}
{\displaystyle{- \frac{i e^{-ikt}\sin(kr)}{2 \pi \sqrt{k} \, r}}}
& \text{for $u\le0$}
\ , 
\\[2ex]
{\displaystyle{\frac{e^{-ikt} \cos(kr) - C_{k/\lambda}}{2 \pi \sqrt{k} \, r}}}
& \text{for $u\ge \lambda^{-1}$}
\ . 
\end{cases}
\label{eq:phisubk-sol-early-and-late}
\end{align}
For $u\le0$, $\phi_k(t,r)$ coincide with the usual 
Minkowski space mode functions. Evaluating the Klein-Gordon inner product 
\eqref{eq:KG-innerproduct-spherical}
on a hypersurface of constant negative 
$t$ shows that the normalisation is $(\phi_k, \phi_{k'})_{KG} = \delta(k-k')$. 
For $u\ge \lambda^{-1}$, the $r$-dependence in the numerator of $\phi_k(t,r)$
\eqref{eq:phisubk-sol-early-and-late} contains the term 
$\cos(kr)$, which 
one would expect from the boundary condition 
\eqref{eq:robin} with $\theta=\pi/2$, 
but it contains also the additive memory term $-C_{k/\lambda}$, 
which carries a recollection of how the boundary condition 
evolved from $\theta=0$ to $\theta=\pi/2$. 
 {From \eqref{eq:CsubK-def} 
we see that $C_K$ is smooth in~$K$, $C_0=1$, and} 
$C_K \to 0$ faster than any inverse power of $K$ as $K\to\infty$, 
as can be verified by repeated integration by parts~\cite{wong}. 
For fixed~$\lambda$, the memory term is hence insignificant 
at large frequencies but significant at low frequencies. 
We shall see in Section \ref{sec:quantised} that the 
memory term has a significant effect on 
the stress-energy tensor and the Wightman 
function. 

A spacetime diagram is shown in Figure~\ref{trplane-fig}, indicating 
the regions $u<0$, $0 < u < \lambda^{-1}$ and $u>\lambda^{-1}$. 

\begin{figure}[t]
\begin{center}
% {\huge{insert figure here}}
\includegraphics[width=6.5cm]{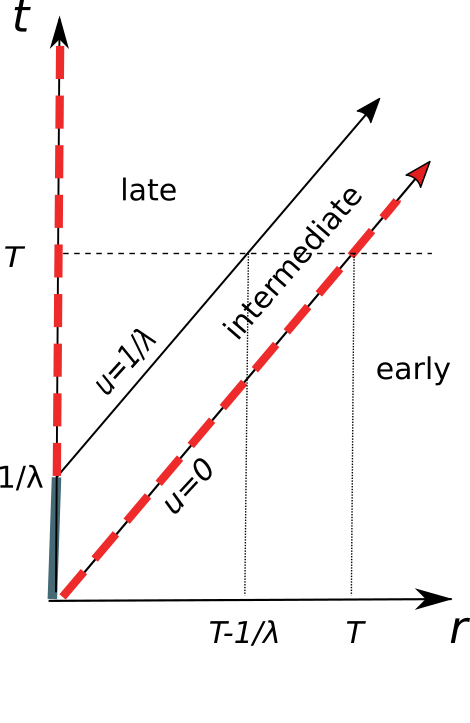}
\end{center}
\caption{Spacetime diagram of the evolving boundary condition \eqref{eq:robin} at $r=0$, 
with the angular dimensions suppressed. 
The interpolation between $\theta=0$ and $\theta=\pi/2$ at $r=0$ 
occurs over $0<t<\lambda^{-1}$ {(solid line)}, and the null cones of the 
events where the boundary condition changes  
fill the region $0<u<\lambda^{-1}$ in the spacetime. 
The early region $u<0$ 
is outside the null cone of $(t,r)=(0,0)$, 
and the mode functions there coincide with those in full Minkowski space. 
The mode functions in the late region $u > \lambda^{-1}$ carry a memory of 
the field evolution that occurred over the intermediate region 
$0<u<\lambda^{-1}$. { The infinite contributions to the energy at $r=0$ (positive infinity) and $r=t$ (negative infinity) are shown as heavy dashed lines.} The spacelike hypersurface 
$t = T > \lambda^{-1}$, shown as a short dashed line, intersects all three regions.
%The wall starts to evolve at $(t,r)=(0,0)$ and becomes a 
%fully formed Dirichlet wall at $(t,r) = (\frac{1}{\lambda},0)$. 
%The horizontal dashed line 
%represents $t=T$. Integrating $y$ over $(0,\lambda T)$ means 
%integrating along the horizontal line from $r=T$ to $r=0$. 
%Taking only $y\in(0,1)$ 
%means including only the segment with $r$ between 
%$(T,T-\frac{1}{\lambda}$). 
\label{trplane-fig}}
\end{figure}

\section{Quantised field\label{sec:quantised}}

\subsection{Field operator and the Fock vacuum}

We quantise the field by using for the spherically symmetric sector the 
mode functions found in Section 
\ref{sec:classical} and treating the nonzero angular momentum sectors as in ordinary Minkowski space. 
As we are interested in the effects due to the evolving boundary condition, 
compared with a field in ordinary Minkowski space, 
we write out only the expressions for the spherically symmetric sector. 

We expand the spherically symmetric sector of the quantised field as 
\begin{align}
\phi = 
\int^{\infty}_{0} \bigl( a_k \phi_k+a^\dagger_k \overline{\phi}_k \bigr) 
\, dk
\,,
\label{eq:phi-op}
\end{align}
where the annihilation and creation operators have the commutators 
$\bigl[a_k,a^\dagger_{k'}\bigr] = \delta(k-k')$. 
By the normalisation of the mode functions, 
this gives the field and its time derivative the correct equal-time commutator. 
We denote by $|0\rangle$ the state that is annihilated by all 
$a_k$ and by all the annihilation operators of the nonzero angular momentum sectors. 
In the region $u< 0$, $|0\rangle$ coincides with the usual Minkowski vacuum, 
which we denote by~$|0_M\rangle$.

\subsection{Energy density\label{subsec:energydensity}}

In the Lorentz frame of the 
metric~\eqref{eq:Mink-spherical-metric}, the 
energy density of the classical scalar field is 
given in terms of the energy-momentum tensor by 
\begin{align}
T_{00} = T_{uu} + T_{vv} + 2 T_{uv}
\,,
\end{align}
where \cite{birrell-davies} 
\begin{subequations}
\label{eq:T-components}
\begin{align}
T_{uu} & =(\partial_u \phi)^2
\,,
\\
T_{vv} & =(\partial_v \phi)^2 
\,, 
\\
T_{uv} & = T_{vu} = \frac{1}{4 r^2} 
\left[(\partial_\theta\phi)^2 + (\sin\theta)^{-2}
(\partial_\varphi\phi)^2\right]
\,, 
\label{eq:T-uv}
\end{align}
\end{subequations}
and we have taken the scalar field to be minimally coupled. 
To obtain the renormalised energy 
density of the quantised field in the state~$|0\rangle$, 
$\langle T_{00} \rangle := \langle0|T_{00}|0\rangle_{\text{ren}}$, 
we point-split the expressions in~\eqref{eq:T-components}, 
take the expectation value in~$|0\rangle$, 
renormalise by subtracting the corresponding expectation value in~$|0_M\rangle$,
and finally take the coincidence limit. 
As $|0\rangle$ and $|0_M\rangle$ differ only in the spherically symmetric sector, 
the derivatives in \eqref{eq:T-uv} show that 
$\langle T_{uv}\rangle=0$, and we find 
\begin{align}
\langle T_{00} \rangle = 
\lim\limits_{\substack{u_1,u_2\rightarrow u\\v_1,v_2\rightarrow v}}
(\partial_{u_1}\partial_{u_2}+\partial_{v_1}\partial_{v_2})
\Big\lbrack
\langle 0|\phi(1)\phi(2)|0\rangle-\langle 0_M|\phi(1)\phi(2)|0_M\rangle
\Big\rbrack\,,
\label{stress-energy}
\end{align}
where $\phi$ now stands for the spherically 
symmetric quantum field~\eqref{eq:phi-op}. 

To evaluate~\eqref{stress-energy}, we write 
$\phi$ in terms of $f$ as in~\eqref{eq:phi-scaledto-f}. 
Recalling that $r = (v-u)/2$, 
this gives 
\begin{align}
\langle T_{00} \rangle 
=
\frac{1}{4\pi}
\left[ \frac{\langle(\partial_u f)^2\rangle}{r^2}
  + \frac{\langle(\partial_v f)^2\rangle}{r^2}
  + \frac{\langle f(\partial_u f-\partial_v f)\rangle
      + \langle (\partial_u f-\partial_v f)f\rangle  }{2r^3}
%+ \frac{\langle f(\partial_u f-\partial_v f)\rangle+c.c.}{2r^3}
+ \frac{\langle f^2\rangle}{2r^4}
\right] 
\,.
\label{T00all-raw}
\end{align}
By~\eqref{eq:phi-scaledto-f}, 
\eqref{eq:phisubk-modes}
and~\eqref{eq:phi-op}, 
$f$ has the expansion
\begin{align}
f = \int^{\infty}_{0} \bigl( a_k U_k+a^\dagger_k \overline{U}_k \bigr) 
\, dk
\,. 
\label{eq:f-op}
\end{align}
From \eqref{eq:Umode-ansatz}, 
\eqref{eq:E-intermsof-R}
and \eqref{eq:f-op} we obtain 
for $\langle T_{00} \rangle$
the final expression 
\begin{align}
\langle T_{00} \rangle 
& =
\frac{\lambda^2}{16 \pi^2 r^2} \int_0^\infty \frac{dK}{K}
\left[ \bigl|R'_K\bigl(\lambda (t-r)\bigr) \bigr|^2 -K^2 \right]
- \frac{1}{32 \pi^2 r^2}
\, 
\frac{\partial}{\partial r}
\! \left(
\frac{\mathcal{G}_\lambda(t,r)}{r} 
\right)
\,,
\label{T00all-final}
\end{align}
where the prime on $R_K$ 
denotes the derivative with respect to the argument and 
\begin{align}
\mathcal{G}_\lambda(t,r)
& = 
\int_0^\infty 
\frac{dK}{K}
\Bigl[ 
\bigl|R_K\bigl(\lambda (t-r)\bigr) \bigr|^2 
+ 2 \cos(2 K \lambda r) - 1 
\notag
\\[1ex]
& \hspace{14ex}
+ R_K\bigl(\lambda (t-r)\bigr) e^{i K \lambda (t+r)}
+ \overline{R_K\bigl(\lambda (t-r)\bigr)} e^{-i K \lambda (t+r)}
\Bigr]
\,. 
\label{eq:calG-def}
\end{align}
The first term in \eqref{T00all-final} 
comes from the first term in~\eqref{T00all-raw}, 
the second term in \eqref{T00all-final} 
comes from the last two terms in~\eqref{T00all-raw}, 
and the second term in \eqref{T00all-raw} vanishes. 
We note in passing that $\mathcal{G}_\lambda$ is related to the renormalised
vacuum polarisation $\langle \phi^2 \rangle$ by 
\begin{align}
\langle \phi^2 \rangle = \frac{\mathcal{G}_\lambda(t,r)}{16 \pi^2 r^2}
\,. 
\end{align}

\subsection{Energy density in the early, 
late and intermediate regions\label{subsec:T00-regions}}

We consider $\langle T_{00} \rangle$ separately 
in the early region, $t < r$, in the 
late region, 
$t > r+\lambda^{-1}$, 
and in the intermediate region, 
$r \le t \le r+\lambda^{-1}$. 

In the early region, $t < r$, 
$|0\rangle$ coincides with~$|0_M\rangle$, and 
$\langle T_{00} \rangle$ vanishes. 
 {This can be seen
immediately from~\eqref{stress-energy}, 
and also by substituting \eqref{eq:R-sol-alt} into  
\eqref{T00all-final} and~\eqref{eq:calG-def}.} 

In the late region, $t > r+\lambda^{-1}$, 
the first term in \eqref{T00all-final} vanishes. 
We show in Appendix \ref{app:T00-late} that $\langle T_{00} \rangle$
is  {a pointwise well defined function, it has 
dependence on both $t$ and~$r$, it is continuous, 
and it has the asymptotic forms} 
\begin{subequations}
\begin{align}
\langle T_{00} \rangle 
& \sim 
\frac{\ln t}{4 \pi^2 r^4}
\ \ 
\text{as $t\to\infty$ with $r$ fixed}, 
\label{eq:T00-late-larget-asymptotics}
\\[1ex]
\langle T_{00} \rangle 
& \sim - \frac{\ln r}{8 \pi^2 r^4}
\ \ 
\text{as $r\to0$ with $t$ fixed}. 
\label{eq:T00-late-smallr-asymptotics}
\end{align}
\end{subequations}
On the hypersurface of $t = T = \text{constant}$ with $T>\lambda^{-1}$ 
(see Figure~\ref{trplane-fig}), 
every ball of radius less than $T - \lambda^{-1}$ contains 
hence an infinite total energy, 
due to positive $\langle T_{00}\rangle$ that diverges as $r\to0$. 

In the intermediate region, $r \le t \le r+\lambda^{-1}$, 
we show in Appendix \ref{app:T00-intermediate} that 
$\langle T_{00}\rangle$ is 
 {a pointwise well defined function, and it is 
continuous in $r$ for $t>r$.}  
Under the technical assumption that the third derivative 
of $\tan\bigl(h(y)\bigr)$ 
is non-negative for sufficiently small positive~$y$, 
we show  {in addition that 
$\langle T_{00}\rangle$ is well defined also at $t=r$ 
(where it then vanishes); 
however, due to contributions from the first 
term in~\eqref{T00all-final}, $\langle T_{00}\rangle$ 
tends to negative infinity as 
$r \to t_-$, faster than any negative multiple of $1/h\bigl(\lambda(t-r)\bigr)$. 
In particular, $\langle T_{00}\rangle$ is not continuous at $r=t$.
This implies} 
that integrating 
$\langle T_{00}\rangle$ on a hypersurface of 
$t = T = \text{constant}>0$ 
over an an arbitrarily small 
neighbourhood of $r=T$  {gives negative infinite energy.} 
The changing boundary condition creates a 
pulse of infinite negative energy travelling outwards, 
immediately to the future of the light cone of the point 
 {$(t,r) = (0,0)$} 
where the boundary condition starts to change. 

 {Combining the results of the two previous paragraphs, it}
follows that the total energy on
the hypersurface of $t = T = \text{constant}$ with $T>\lambda^{-1}$ 
is not defined, even though $\langle T_{00}\rangle$ 
exists at every point. 
Given an $r_0 \in (0, T)$, 
the total energy for $r\le r_0$ is positive infinite, 
due to a large  positive contribution from $r\to0$, 
while the total energy for $r\ge r_0$ is negative infinite, 
due to a large negative contribution from $r \to T_-$.

\subsection{Rapid boundary condition change\label{subsec:rapidchange}}

Finally, consider the limit in which the boundary 
condition changes rapidly, $\lambda\to\infty$. 
At each given point in the region $t > r$, 
$\langle T_{00} \rangle$ diverges in this limit, 
with the asymptotic form 
\begin{align}
\langle T_{00} \rangle \sim \frac{\ln \lambda}{8 \pi^2 r^4}
\,,
\label{eq:T00-late-largelambda}
\end{align}
as we show in Appendix~\ref{app:T00-late}. 
 {In the limit of rapid source creation,
$\langle T_{00}\rangle$ hence diverges
everywhere inside the light cone of the creation event.
This is in a stark contrast to the
corresponding $(1+1)$-dimensional 
wall creation, where $\langle T_{00}\rangle$ vanishes 
inside the light cone of the creation event~\cite{Brown:2015yma}.}

\section{Response of an Unruh-DeWitt detector\label{sec:UDWdetector}} 

In this section we consider an inertial 
Unruh-DeWitt (UDW) detector \cite{Unruh:1976db,DeWitt:1979}
at a fixed spatial location. 

%We wish to know what happens 
%to the detector's response as $\lambda\to\infty$. 

We consider a detector that is coupled linearly to the quantum field. 
Within first-order perturbation theory, the probability of the 
detector to undergo a transition from a state with energy $0$ 
to a state with energy $\omega$ is proportional to the response function, given by 
\cite{Unruh:1976db,DeWitt:1979,birrell-davies,wald-smallbook} 
\begin{align}
\mathcal{F}(\omega) =
\int_{-\infty}^{\infty}dt_1\int_{-\infty}^{\infty}dt_2 
\, e^{-i\omega(t_1-t_2)}
\, \chi(t_1)\chi(t_2) \, \mathcal{W}(t_1, t_2)
\,,
\label{eq:F-genformula}
\end{align}
where the smooth real-valued switching function $\chi$ specifies how the 
detector's interaction with the field is turned on and off, and 
$\mathcal{W}$ is the pull-back of the field's Wightman function 
to the detector's worldline. In the Minkowski vacuum~$|0_M\rangle$, 
we have \cite{birrell-davies}
\begin{align}
\mathcal{W}_{|0_M\rangle}(t_1, t_2) = - \frac{1}{4\pi^2 {(t_1 - t_2 - i \epsilon)}^2}
\,,
\end{align}
where the limit $\epsilon\to0_+$ is implied and encodes 
the distributional part of~$\mathcal{W}$, 
and from \eqref{eq:F-genformula} we obtain 
\cite{Louko:2014aba,Satz:2006kb,Louko:2007mu} 
\begin{align}
\mathcal{F}_{|0_M\rangle}(\omega)
&=
-\frac{\omega \Theta(-\omega)}{2\pi} 
\int_{-\infty}^{\infty} d u \, {[\chi(u)]}^2
\notag
\\[1ex]
&\hspace{3ex}
+ 
\frac{1}{2\pi^2}
\int^{\infty}_{0} 
ds \, 
\frac{\cos(\omega s)}{s^2} 
\int_{-\infty}^{\infty} d u \, 
\chi(u) [\chi(u) - \chi(u-s)] 
\ , 
% \label{eq:F1-0M-inert}
\end{align}
where $\Theta$ is the Heaviside function. Denoting by $\mathcal{F}_{|0\rangle}$ 
the response function in the state~$|0\rangle$, and setting 
$\Delta \mathcal{F} = \mathcal{F}_{|0\rangle} - \mathcal{F}_{|0_M\rangle}$, 
we then have 
\begin{align}
\Delta \mathcal{F}(\omega) =
\int_{-\infty}^{\infty}dt_1\int_{-\infty}^{\infty}dt_2 
\, e^{-i\omega(t_1-t_2)}
\, \chi(t_1)\chi(t_2) \, \Delta \mathcal{W}(t_1, t_2)
\,,
\label{eq:Delta-F-gen}
\end{align}
where 
\begin{align}
\Delta \mathcal{W}(t_1, t_2)
&= \frac{1}{4\pi r^2}
\int_0^\infty 
\Bigl(
U_k(t_1-r, t_1+r) \overline{U_k(t_2-r, t_2+r)} 
\notag
\\
&\hspace{14ex}
- 
U^M_k(t_1-r, t_1+r) \overline{U^M_k(t_2-r, t_2+r)} 
\, 
\Bigr) 
\, dk 
\,,
\label{eq:Delta-W-def}
\end{align}
$r$ is the location of the detector, 
and 
$U^M$ is as in \eqref{eq:Umode-ansatz} but with 
$E_k(u) = - e^{-iku}$ for all~$u$. 
Note that $\Delta \mathcal{W}(t_1, t_2)$ vanishes when $t_1, t_2 \le r$. 

We consider a detector that operates only in the future region, 
$t > r + \lambda^{-1}$. 
For $t_1, t_2 > r + \lambda^{-1}$, 
the integrand in \eqref{eq:Delta-W-def} 
can be rearranged and split to give 
\begin{align}
4\pi^2 r^2 
\Delta \mathcal{W}(t_1, t_2)
& = 
\int_0^\infty 
\frac{dK}{K}
\Bigl[
(1-C_K) \,e^{iK\lambda t_2} 
+ (1-\overline{C}_K) \,e^{-iK\lambda t_1}
\Bigr]
\cos(K\lambda r)
\notag 
\\[1ex]
& \hspace{2ex}
+ \int_0^\infty 
\frac{dK}{K}
\Bigl[
|C_K|^2 - \cos(K\lambda r)
\Bigr]
\notag 
\\[1ex]
& \hspace{2ex}
+ \int_0^\infty 
\frac{dK}{K}
\Bigl[
\bigl(1-e^{iK\lambda t_2} \bigr)
+ 
\bigl(1-e^{-iK\lambda t_1} \bigr)
\Bigr]
\cos(K\lambda r)
\notag 
\\[1ex]
& \hspace{2ex}
+ \int_0^\infty 
\frac{dK}{K}
\bigl(
e^{-iK\lambda (t_1 - t_2)} - 1 
\bigr)
\cos(2K\lambda r)
\notag 
\\[1ex]
& \hspace{2ex}
+ \int_0^\infty 
\frac{dK}{K}
\bigl[
\cos(2K\lambda r )
 - \cos(K\lambda r)
\bigr]
\,.
\label{eq:Delta-W-split}
\end{align}
The integrals can be evaluated 
by the formulas of Appendix~\ref{app:integrals}, 
with the result 
\begin{align}
8\pi^2 r^2 
\Delta \mathcal{W}(t_1, t_2)
& = 
H \bigl(\lambda (t_2 + r)\bigr) + H \bigl(\lambda (t_2 - r)\bigr)
+   {H \bigl(\lambda (t_1 + r)\bigr) + H \bigl(\lambda (t_1 - r)\bigr)}
\notag 
\\[1ex]
& \hspace{2ex}
+ 
\ln \! 
\left(\frac{ \lambda^2 \bigl(t_1^2-r^2\bigr)\bigl(t_2^2-r^2\bigr)}{\bigl| 4r^2 
- {(t_1-t_2)}^2 \bigr|}\right)
\notag 
\\[1ex]
& \hspace{2ex}
+ i \pi \bigl[ 
\Theta(t_2 - t_1 - 2r)
- 
\Theta(t_1 - t_2 - 2r)
\bigr]
%\notag 
%\\[1ex]
%& \hspace{2ex}
+ 2k_1 
\,,
\label{eq:Delta-W-final}
\end{align}
where the function $H$ is defined in 
Proposition \ref{prop:H-properties} 
and 
the constant $k_1$ 
is given by~\eqref{eq:k1-def}. 
Note that $\mathcal{W}(t_1, t_2)$ has singularities at $|t_1 - t_2| = 2r$, 
which is when the two points are separated by a null geodesic that bounces off the origin, 
but this singularity is only logarithmic, and $\Delta \mathcal{W}(t_1, t_2)$ 
is representable by a function. 
  {Note also that the first four terms in \eqref{eq:Delta-W-final} are real
because $t_1, t_2 > r + \lambda^{-1}$ by assumption and $H(\alpha)$ 
is real for $\alpha\ge1$ by~\eqref{eq:Hfunc-final}.} 

We consider two limits. 

First, suppose that the support of $\chi$ is contained 
in some finite interval of fixed length, 
centered at $t=t_c$, and consider the limit $t_c\to\infty$. 
By the large argument expansion of $H$ in~\eqref{eq:H-largearg}, 
the contribution from the $H$-terms in 
\eqref{eq:Delta-W-final} vanishes in this limit, 
and we have 
\begin{align}
\Delta \mathcal{F}(\omega) \sim 
\frac{(\ln t_c) \bigl|\widehat\chi(\omega)\bigr|^2}{2\pi^2 r^2}
\,,
\label{eq:Delta-F-latetime}
\end{align}
where the hat denotes the Fourier transform, 
$\widehat\chi(\omega) := \int_{-\infty}^\infty e^{-i\omega t} \, \chi(t) \, dt$. 
$\Delta \mathcal{F}$~hence diverges in this limit, proportionally to $\ln t_c$. 
This is similar to the late time divergence of 
$\langle T_{00} \rangle$~\eqref{eq:T00-late-larget-asymptotics}. 

Second, consider the limit of large~$\lambda$.
We assume that the support of $\chi$ is contained in $[r+a, \infty)$, 
where $a$ is a positive constant, 
and we take $\lambda$ large enough that $\lambda^{-1} < a$. 
By similar arguments, we find 
\begin{align}
\Delta \mathcal{F}(\omega) =
\frac{(\ln\lambda) \bigl|\widehat\chi(\omega)\bigr|^2}{4\pi^2 r^2}
\ + \ O(1)
\,. 
\label{eq:Delta-F-largelambda}
\end{align}
The $\ln\lambda$ divergence in \eqref{eq:Delta-F-largelambda} at $\lambda\to\infty$ 
is similar to the $\ln\lambda$ divergence 
of $\langle T_{00} \rangle$ in~\eqref{eq:T00-late-largelambda}.

\section{Summary and discussion\label{sec:summary}}

We have addressed the smooth and sharp creation of a pointlike 
source for a massless scalar field in $(3+1)$-dimensional Minkowski spacetime, 
implemented by 
introducing at the spatial origin a time-dependent 
boundary condition that interpolates between ordinary 
Minkowski dynamics and a Dirichlet-type boundary condition. 
We found that the process is significantly more singular than 
a corresponding creation of a wall in $(1+1)$-dimensional 
Minkowski spacetime~\cite{Brown:2015yma}. 
While $\langle T_{00}\rangle$ is well defined away from the 
source, it is unbounded from above and below: 
there is a pulse of infinite negative energy travelling outwards, 
and there is a cloud of infinite positive energy 
that lingers around the fully formed source. 
In the rapid source creation limit, 
$\langle T_{00}\rangle$ 
diverges everywhere in the  {timelike} future of 
the creation event, 
and so does the response of an Unruh-DeWitt detector that operates 
in the  {timelike} future of the creation event. 

There are two technical reasons for the differences 
between our $(3+1)$-dimensional process and the corresponding 
$(1+1)$-dimensional process analysed in~\cite{Brown:2015yma}. 
First, as our boundary condition is at a single spatial point, 
it does not divide the $(3+1)$-dimensional spacetime into two regions. 
Our boundary condition in fact resembles more closely the removal 
of a $(1+1)$-dimensional wall than its creation~\cite{Harada:2016kkq}. 
This affects both $\langle T_{00} \rangle$ 
and the response of the Unruh-DeWitt detector. 
Second, the $(3+1)$-dimensional 
$\langle T_{00} \rangle$ \eqref{T00all-raw} 
contains terms that have no counterpart in $1+1$ dimensions, 
and these additional terms are especially significant near the source. 

We emphasise that the infinite negative 
energy radiating from the evolving source 
is localised in the immediate future of the light 
cone of the point where the boundary condition starts to change, and this negative 
energy cannot be made finite by slowing down the boundary condition change. 
We have verified, adapting the methods of our Appendix \ref{app:T00-intermediate} 
and under analogous technical assumptions, that 
a similar infinite energy occurs also in the 
$(1+1)$-dimensional wall creation of 
Section 2 in~\cite{Brown:2015yma}, 
but with two qualitative differences: 
the infinite energy in \cite{Brown:2015yma} 
is localised not where the boundary condition 
starts to change but where the boundary condition 
approaches its final value, 
and the infinite energy has positive sign.  
Specifically, formula (2.17b) in
\cite{Brown:2015yma} tends to $+\infty$ as $u \to \lambda^{-1}_-$, 
so fast that the total energy 
in (2.18) and (2.19) is positive infinity. 
Formula (2.20) in \cite{Brown:2015yma} is hence not correct: 
the term denoted therein by $O(1)$ should be replaced by positive infinity. 
We suspect that similar comments may apply to formulas 
(3.7b), (3.8) 
and (3.9) in~\cite{Brown:2015yma}. Note, however, 
that the results about detector response versus total energy in \cite{Brown:2015yma} 
were obtained via the boundary condition family~(4.1), 
and they are hence not affected by the infinities that occur in (2.18)--(2.20). 

  {Our results, including the divergent negative energy near $r=t$, 
suggest} that the creation of a pointlike source in quantum 
field theory may be sufficiently singular to model the breaking of correlations 
that has been proposed to happen at the horizon of an evaporating black hole 
\cite{braunstein-et-al,Mathur:2009hf,Almheiri:2012rt,Susskind:2013tg,Almheiri:2013hfa,Hutchinson:2013kka,Harlow:2014yka}.
  {It is conceivable that the divergent negative energy near $r=t$ and the divergent 
positive energy near $r=0$ could be arranged to cancel and produce a finite total 
energy on each hypersurface of constant~$t$, 
but such a cancellation would require a nonlocal correlation between the 
regulator near $r=t$ and the regulator near $r=0$.} 

  {We note in passing that while the source creation contributes to the 
imaginary part of the Wightman function, the imaginary part of the Wightman function 
on a trajectory of constant $r$ in the late time region consists only of the terms proportional to 
$\Theta(t_2 - t_1 - 2r)$ and $\Theta(t_1 - t_2 - 2r)$ in~\eqref{eq:Delta-W-final}. 
As the imaginary part of the Wightman function is the commutator, 
this shows that the source creation does not produce a lingering violation 
of strong Huygens' principle in the late time region on a trajectory of constant~$r$. 
The source creation does hence not appear to offer opportunities 
for enhanced quantum communication of the kind
examined in~\cite{Jonsson:2014lja,Blasco:2015a,Blasco:2015b}.} 
     
Finally, we anticipate that our techniques can be adapted to address 
an evolving boundary condition on a spherical shell or ball, 
where the dynamics will be 
potentially 
more germane for modelling possible new physics in the 
spacetime of an evaporating black hole. 
In particular, will the evolving boundary condition on the spherical shell or ball
lead to diverging positive or negative energies in some regions of the spacetime?

\section*{Acknowledgments} 

We thank 
Jim Langley for providing the proof of Proposition~\ref{prop:Bn}, 
  {Eduardo Mart\'in-Mart\'inez for raising the 
question of the strong Huygens' principle violation, and}
Joel Fein\-stein and Alex Schenkel for helpful discussions. 
This work was funded in part by the Natural Sciences and Engineering 
Research Council of Canada (MEC and GK) and by 
Science and Technology Facilities Council (JL, Theory Consolidated Grant ST/J000388/1). 
For hospitality, 
GK thanks the University of Nottingham, 
and 
JL thanks the University of Winnipeg, 
the Winnipeg Institute for Theoretical Physics, 
and the Nordita 2016 
``Black Holes and Emergent Spacetime'' 
program.

\appendix

\section{Scalar Laplacian on punctured $\BbbR^n$\label{app:laplaceoperator}}

In this appendix we record relevant properties of the scalar 
Laplacian on punctured Euclidean $\BbbR^n$ with $n\ge2$. 

We use spherical coordinates in which $r$ is the radial coordinate 
and the puncture is at $r=0$. The scalar Laplacian reads 
\bea
\nabla^2 = \frac{1}{r^{n-1}}\partial_r \bigl(r^{n-1}\partial_r \bigr) + \frac{1}{r^2} \nabla^2_{S^{n-1}}
\,,
\eea
where $\nabla^2_{S^{n-1}}$ is the Laplacian on unit $S^{n-1}$. The $L_2$ inner product is 
\bea
(g_1, g_2) = \int_0^\infty r^{n-1} \, dr \int_{S^{n-1}} d\Omega \, 
\overline{g_1} g_2 \,,
\eea
where $d\Omega$ 
is the volume element on unit $S^{n-1}$. 

The scaling $g = r^{(1-n)/2} f$ maps the inner product to 
\bea
(f_1, f_2)_{sc} = \int_0^\infty dr \int_{S^{n-1}} d\Omega \, 
\overline{f_1} f_2 
\label{eq:ip-scaled}
\eea
and $\nabla^2$ to 
\bea
\nabla^2_{sc} = \partial_r^2  - \frac{(n-1)(n-3)}{4r^2} + \frac{1}{r^2} \nabla^2_{S^{n-1}}
\,. 
\label{eq:lap-scaled}
\eea
After decomposition into spherical harmonics, 
$\nabla^2_{sc}$ reduces for each harmonic to the operator $\partial_r^2 - a/r^2$, 
where $a \ge -1/4$, and the inner product 
$(\,\cdot\, , \,\cdot \,)_{sc}$ reduces to the standard $L_2$ inner 
product on the positive half-line. The self-adjoint extensions of $\nabla^2_{sc}$ 
for each harmonic can hence be analysed by standard 
methods \cite{reed-simonII,blabk} (for a pedagogical introduction see~\cite{bonneau}), 
and the outcomes are summarised in~\cite{Kunstatter:2008qx}. 
The self-adjoint extension is unique except for $a = -1/4$, which occurs in the 
spherically symmetric sector for $n=2$, and for $a=0$, 
which occurs in the spherically symmetric sector for $n=3$. In each of these two cases 
there is a $U(1)$ family
of self-adjoint extensions, characterised by a boundary condition at the origin. 

In the $n=3$ spherically symmetric sector, the boundary condition 
at the origin is
\begin{align}
\cos\theta\lim_{r\to0} f(r)
= 
L \sin\theta \lim_{r\to0} f'(r)
\,,
\end{align}
where $L$ is a positive constant of dimension length, introduced for dimensional convenience, 
and $\theta \in [0, \pi)$ is the parameter that specifies the extension. 
For $\theta \in [0, \pi/2]$ the spectrum consists of the negative continuum, while for 
$\theta \in (\pi/2,\pi)$ there is also one proper eigenvalue, 
which is positive and nondegenerate. 
The case $\theta=0$ reduces to the essentially self-adjoint 
operator $\nabla^2$ on $L_2(\BbbR^3)$.

\section{Mode function regularity across 
$r=t$\label{app:junction-differentiability}}

In this appendix we show that 
the function $B(y)$ \eqref{eq:B-sol}
is smooth at $y=0$ and the function 
$R_K(y)$ \eqref{eq:R-sol-alt} is $C^{25}$ at $y=0$. 
This shows that the mode functions
% , given by 
% \eqref{eq:phisubk-modes}, 
% \eqref{eq:Umode-ansatz}, 
% \eqref{eq:E-intermsof-R}
% and~\eqref{eq:R-sol-full}, 
are $C^{25}$ across $r=t$.

\subsection{$B(y)$ \eqref{eq:B-sol}}

We shall show that
the function $B(y)$ \eqref{eq:B-sol}
is smooth at $y=0$. 

From \eqref{eq:B-sol} it is immediate that $B(y) \to 0$ 
as $y\to0_+$. We show below in Proposition \ref{prop:Bn} 
that $B^{(n)}(y) \to 0$ 
as $y\to0_+$ for $n \in \BbbN = \{1,2,\ldots\}$.
From this it follows by L'H\^opital and 
induction in $n$ that all derivatives of 
$B(y)$ at $y=0$ exist and vanish. 

\begin{prop}
For $n \in \BbbN$, 
$B^{(n)}(y) \to 0$ 
as $y\to0_+$. 
\label{prop:Bn} 
\end{prop}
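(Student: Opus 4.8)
The plan is to reduce everything to a single competition: $B$ decays faster than any power of $h$, whereas each derivative $B^{(n)}$ equals $B$ times a quantity that can blow up only polynomially in $1/h(y)$. First I would differentiate the relation \eqref{eq:Bscaled-diffeq}, i.e.\ $B'(y)=B(y)\cot\bigl(h(y)\bigr)$ on $(0,1)$, repeatedly. Setting $g(y):=\cot\bigl(h(y)\bigr)$, an immediate induction gives $B^{(n)}=B\,Q_n$, where $Q_1=g$ and $Q_{n+1}=Q_n'+g\,Q_n$; thus each $Q_n$ is a polynomial in $g$ and its derivatives. Since $\frac{d}{dy}\cot\bigl(h(y)\bigr)=-\bigl(1+\cot^2 h\bigr)h'$ and, more generally, every $g^{(j)}$ is a polynomial in $\cot h$ whose coefficients are products of the derivatives $h',h'',\dots$, and since $h$ being smooth with $h\equiv0$ for $y\le0$ forces every $h^{(j)}(y)\to0$ as $y\to0_+$ (hence bounded near $0$), I would conclude that there are a constant $C_n$ and an integer $N_n$ with
\begin{align*}
\bigl|Q_n(y)\bigr|\le C_n\,h(y)^{-N_n}\qquad\text{for all sufficiently small }y>0,
\end{align*}
using also that $\cot\bigl(h(y)\bigr)\le C/h(y)$ near $0$.

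The heart of the argument is to show that $B$ beats every such power, i.e.\ that for each $M>0$,
\begin{align*}
B(y)\,h(y)^{-M}\to0\qquad\text{as }y\to0_+.
\end{align*}
I would prove this by taking logarithms. Writing $\Phi_M(y)=\ln\bigl(B(y)h(y)^{-M}\bigr)=-\int_y^1\cot\bigl(h(z)\bigr)\,dz-M\ln h(y)$, a direct computation gives $\Phi_M'(y)=-\cot\bigl(h(y)\bigr)+M\,h'(y)/h(y)$. The key observation is that the second term is negligible compared with the first: since $h'(y)\to0$, $\sin h/h\to1$, and $\cos h\to1$, one has $\dfrac{M h'/h}{\cot h}=\dfrac{M\,h'\sin h}{h\cos h}\to0$, so $\Phi_M'(y)\le-\tfrac12\cot\bigl(h(y)\bigr)\le-\tfrac{1}{4h(y)}$ for small $y>0$. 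Because $h$ vanishes to infinite order at the origin we have $h(z)\le z^2$ near $0$, whence $\int_0 h(z)^{-1}\,dz$ diverges; integrating the bound on $\Phi_M'$ therefore forces $\Phi_M(y)\to-\infty$, which is the claim.

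Combining the two steps yields $\bigl|B^{(n)}(y)\bigr|=B(y)\bigl|Q_n(y)\bigr|\le C_n\,B(y)\,h(y)^{-N_n}\to0$ as $y\to0_+$, which is Proposition~\ref{prop:Bn}. I expect the main obstacle to be the logarithmic competition in the middle step — specifically, verifying cleanly that the nonlocal term $\int_y^1\cot(h)$ dominates $\ln h(y)$, so that $\Phi_M\to-\infty$ — rather than the bookkeeping for $Q_n$, which is routine once the infinite-order vanishing of $h$ at the origin is exploited to treat all derivatives of $h$ as bounded. A point worth checking carefully is that the sign and growth of $h'$ do not spoil the estimate $M h'/h=o(\cot h)$; this is exactly where the smoothness hypothesis \eqref{eq:theta-scaled} on $h$ at $y=0$ enters.
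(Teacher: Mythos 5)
Your argument is in substance the paper's own proof: the paper sets $g=\tan h$, writes $B^{(n)}=P_n\,B/g^n$ with $P_n$ a bounded polynomial in $g$ and its derivatives, and shows $B/g^n\to0$ by comparing $n\ln g$ against $\int_y^1 dz/g(z)=\int_y^1\cot\bigl(h(z)\bigr)\,dz$ via L'H\^opital --- which is exactly your competition between $M\ln h$ and the nonlocal integral, since $\ln\tan h\sim\ln h$; likewise your pointwise estimate $Mh'/h=o(\cot h)$ is the same fact the paper extracts as $-ng'\to0$. The one thing you must fix is a sign slip at the heart of the argument: since $\frac{d}{dy}\bigl(-\int_y^1\cot(h(z))\,dz\bigr)=+\cot\bigl(h(y)\bigr)$ and $\frac{d}{dy}\bigl(-M\ln h(y)\bigr)=-Mh'(y)/h(y)$, the correct derivative is $\Phi_M'(y)=+\cot\bigl(h(y)\bigr)-Mh'(y)/h(y)$, the negative of what you wrote. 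Taken literally, your bound $\Phi_M'\le-\tfrac{1}{4}h^{-1}$ gives $\Phi_M(y)=\Phi_M(\delta)-\int_y^\delta\Phi_M'(z)\,dz\ge\Phi_M(\delta)+\int_y^\delta\tfrac{dz}{4h(z)}\to+\infty$, i.e.\ the opposite of your claim. With the sign corrected the argument goes through verbatim: $\Phi_M'\ge\tfrac12\cot h\ge\tfrac{1}{4}h^{-1}$ for small $y>0$, hence $\Phi_M(y)\le\Phi_M(\delta)-\int_y^\delta\tfrac{dz}{4h(z)}\to-\infty$, using, as you correctly note, that $h$ vanishes to infinite order at $0$ so that $\int_0 dz/h(z)$ diverges. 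The remaining bookkeeping ($|Q_n|\le C_n h^{-N_n}$ from boundedness of the derivatives of $h$ and $\cot h\le 1/h$) is fine.
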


\begin{proof}
(This proof was provided by Jim Langley.) 
Let $0<y<1$, and write $g(y) := \tan\bigl(h(y)\bigr)$, 
where $h$ was defined in Section~\ref{subsec:field-eq}. 
Note that $g(y)>0$, 
$g(y)$ and all its derivatives approach $0$ as $y\to0_+$, and from 
\eqref{eq:B-sol} we have 
\begin{align}
B(y) &= 
\exp\left(-\int_{y}^{1} \frac{dz}{g(z)} \right)
\,,
\label{eq:B-ito-g}
\\
B'(y) & = B(y)/g(y)
\,. 
\label{eq:appBident-Bprime}
\end{align}

For $n \in \BbbN$, induction gives
\begin{subequations}
\begin{align}
B^{(n)}(y) &= 
P_n(y) f_n(y)
\,, 
\\
f_n(y) &= 
\frac{B(y)}{\bigl(g(y)\bigr)^n} 
\,,
\label{eq:fn-def}
\end{align}
\end{subequations}
where each $P_n$ is a polynomial in $g$ and its derivatives. 
Since each $P_n$ is bounded as $y\to0_+$, 
it suffices to show that $f_n(y) \to 0$
as $y\to0_+$ for $n \in \BbbN$.

From \eqref{eq:fn-def}
we have 
\begin{align}
\ln\bigl(f_n(y)\bigr) =
-\left(\int_{y}^{1} \frac{dz}{g(z)} \right) 
\left(1 + \frac{n \ln\bigl(g(y)\bigr)}{\displaystyle\int_{y}^{1} 
\frac{dz}{g(z)}} \right) 
\,.
\label{eq:lnf-formula}
\end{align}
As $y\to0_+$, the first parentheses in \eqref{eq:lnf-formula} tend to~$\infty$, 
while the second parentheses tend to $1$ by L'H\^opital. 
Hence $\ln\bigl(f_n(y)\bigr) \to -\infty$ 
as $y\to0_+$, by which $f_n(y) \to 0$ as $y\to0_+$. 
\end{proof}

\subsection{$R_K(y)$ \eqref{eq:R-sol-alt}}

We shall show that the function $R_K(y)$ \eqref{eq:R-sol-alt} 
is $C^{25}$ at $y=0$. 

We write \eqref{eq:R-sol-alt} as 
\begin{align}
R_K(y) = 
\begin{cases}
{\displaystyle{- e^{-iK y}}}
& \text{for $y \le 0$}
\ , 
\\
{\displaystyle{- e^{-iK y} 
- 2 i K S_K(y)}}
& \text{for $0 < y < \infty$}
\ , 
\end{cases}
\label{eq:R-sol-aalt}
\end{align}
where $K>0$ and 
\begin{subequations}
\label{eq:J-and-S:def}
\begin{align}
S_K(y) &= J_K(y)/B(y)
\,,
\label{eq:J-and-S:def:onlyS}
\\[1ex]
J_K(y) &= \int_0^{y} {B}(z) \, e^{-iK z} \, dz 
\,. 
\end{align}
\end{subequations} 
We show below in Proposition \ref{prop:Sn-limit} that 
$S_K^{(n)}(y) \to 0$ as $y\to0_+$ for $n=0,1,2,\ldots,25$. 
This and 
\eqref{eq:R-sol-aalt} 
show that $R_K(y)$ is $C^{25}$ at $y=0$. 
For the purposes of Appendix~\ref{app:T00-intermediate}, 
we formulate Proposition \ref{prop:Sn-limit} 
for $S_K$ that is defined by 
\eqref{eq:J-and-S:def}
not just for $K>0$ but for $K\in\BbbR$. 

\begin{lem}
For $K\in\BbbR$, $0<y<1$ and $n \in \{1,2,\ldots,25\}$, we have 
\begin{align}
S_K^{(n)}(y)
= \frac{h_{K,n}(y)}{B(y) \bigl(g(y)\bigr)^n}
\,,
\label{eq:Sn-formula}
\end{align} 
where $g$ was defined above \eqref{eq:B-ito-g} 
and $h_{K,n}$ satisfies  
\begin{align}
h_{K,n}^{(k)}(y) = r_{K,n,k}(y) B(y) + s_{K,n,k}(y) J_K(y)
\ \ \text{for}
\ \ 
0 \le k \le n
\,,
\end{align} 
where each $r_{K,n,k}$ and $s_{K,n,k}$ is a polynomial in $g$, 
its derivatives 
and~$e^{-iKy}$, and $r_{K,n,n}(y) \to 0$ as $y\to0_+$. 
\label{lemma:Sn-der} 
\end{lem}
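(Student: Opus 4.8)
The plan is to prove the lemma by induction on $n$, with a nested induction on the derivative order~$k$, keeping all coefficients manifestly polynomial at each stage. Throughout I would use the two elementary identities $B'=B/g$ (this is \eqref{eq:Bscaled-diffeq}, since $\cot\bigl(h(y)\bigr)=1/g$) and $J_K'=B\,e^{-iKy}$, together with the fact established with Proposition~\ref{prop:Bn} that $g$ and all its derivatives tend to $0$ as $y\to0_+$. First I would set up the $n$-recursion in a form that avoids spurious poles in~$g$. Writing $P_n:=g^n S_K^{(n)}$, so that $h_{K,n}=B\,P_n$, one differentiation gives $P_{n+1}=g\,P_n'-n\,g'P_n$, equivalently $h_{K,n+1}=g\,h_{K,n}'-(1+n\,g')h_{K,n}$, with $P_0=S_K$. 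Substituting the first-order relation $S_K'=e^{-iKy}-S_K/g$ (immediate from $S_K=J_K/B$ and \eqref{eq:Bscaled-diffeq}) shows that $P_n=r_{K,n,0}+s_n S_K$, where $r_{K,n,0}$ is a polynomial in $g,g',\dots,e^{-iKy}$ and $s_n:=s_{K,n,0}$ is a polynomial in $g,g',\dots$ that is \emph{independent of}~$K$, with $s_0=1$, $s_{n+1}=g\,s_n'-(1+n\,g')s_n$, and $r_{K,n+1,0}=g\,r_{K,n,0}'+g\,s_n e^{-iKy}-n\,g'r_{K,n,0}$. This disposes of the case $k=0$, and the last recursion shows by inspection (base term $g\,e^{-iKy}$) that $r_{K,n,0}$ is divisible by the variable~$g$.

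For $k\ge1$ I would iterate the differentiation rule
\[
\frac{d}{dy}\bigl(r\,B+s\,J_K\bigr)=\Bigl(r'+\frac{r}{g}+s\,e^{-iKy}\Bigr)B+s'\,J_K,
\]
which at once yields $s_{K,n,k}=s_n^{(k)}$ and the recursion $r_{K,n,k+1}=r_{K,n,k}'+r_{K,n,k}/g+s_n^{(k)}e^{-iKy}$. The one and only obstruction to the asserted polynomial form is the term $r_{K,n,k}/g$: the statement is self-consistent only if $r_{K,n,k}$ is divisible by~$g$ for every $k\le n-1$, so the real content of the proof is to establish, and then propagate, this divisibility through both inductions.

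This propagation is the step I expect to be the main obstacle, precisely because $B'=B/g$ manufactures a factor $1/g$ at every differentiation that must cancel. I would strengthen the induction hypothesis so that it carries, beyond ``$r_{K,n,k}$ is divisible by $g$ for $0\le k\le n-1$'', explicit control of the lowest-order-in-$g$ part of~$r_{K,n,k}$: concretely, that on setting the variable $g$ to zero while retaining $g',g'',\dots$ one has $(1+g')\,(r_{K,n,k}/g)\big|_{g=0}+s_n^{(k)}\big|_{g=0}\,e^{-iKy}=0$ for $0\le k\le n-2$, where the required divisibility of $s_n^{(k)}\big|_{g=0}$ by $(1+g')$ is checked directly from the $s_n$-recursion (so that $s_n|_{g=0}=(-1)^n\prod_{j=0}^{n-1}(1+j\,g')$). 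This identity is exactly what forces the $g$-free part of $r_{K,n,k+1}$ to cancel, returning a factor of~$g$. The delicate point is that ``reduction modulo $g$'' does \emph{not} commute with $d/dy$, so one must track these lowest-order coefficients with care as the relation is pushed up in $k$ and carried across the $n$-recursion; this bookkeeping is where the genuine work lies.

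Once divisibility is in hand the conclusion $r_{K,n,n}\to0$ is painless. Writing $r_{K,n,n-1}=g\,\rho$ with $\rho$ polynomial, the differentiation rule gives $r_{K,n,n}=g'\rho+g\,\rho'+\rho+s_n^{(n-1)}e^{-iKy}$, in which the first two terms vanish as $y\to0_+$ because $g,g'\to0$, while $\rho+s_n^{(n-1)}e^{-iKy}$ has vanishing value when all of $g,g',g'',\dots$ are set to zero, by the lowest-order information carried through the induction; hence every surviving monomial contains a factor of some derivative of~$g$ and tends to~$0$. Since $h$ enters only through its smoothness and the vanishing of $g$ and all its derivatives at the origin, the argument is uniform in~$n$, so the cut-off at $n=25$ plays no role in this lemma and is dictated only by the later appendices.
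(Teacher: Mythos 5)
Your recursions are all correct: $P_{n+1}=gP_n'-n\,g'P_n$ with $P_n=g^nS_K^{(n)}$, the identification $s_{K,n,k}=s_n^{(k)}$, the rule $r_{K,n,k+1}=r_{K,n,k}'+r_{K,n,k}/g+s_n^{(k)}e^{-iKy}$, and the closed form $s_n|_{g=0}=(-1)^n\prod_{j=0}^{n-1}(1+jg')$ all check out, and you have correctly isolated the crux: the term $r_{K,n,k}/g$ is a polynomial only if $r_{K,n,k}$ is divisible by the variable $g$, and that divisibility must be propagated through both inductions. But the proof stops exactly there. The strengthened hypothesis you propose, $(1+g')\bigl(r_{K,n,k}/g\bigr)\big|_{g=0}+s_n^{(k)}\big|_{g=0}e^{-iKy}=0$ for $k\le n-2$, is indeed precisely the condition for $r_{K,n,k+1}$ to be divisible by~$g$, but it does not propagate itself. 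Writing $r_{K,n,k}=g\rho_k$ and $(1+g')\rho_k+s_n^{(k)}e^{-iKy}=g\mu_k$, one has $\rho_{k+1}=\rho_k'+\mu_k$, and differentiating the defining relation of $\mu_k$ gives
\begin{align*}
(1+g')\rho_{k+1}+s_n^{(k+1)}e^{-iKy}
=(1+2g')\mu_k+g\mu_k'-g''\rho_k+iK\,s_n^{(k)}e^{-iKy}\,,
\end{align*}
whose reduction mod $g$ involves $\mu_k|_{g=0}$, i.e.\ the \emph{second}-order-in-$g$ coefficient of $r_{K,n,k}$, which your hypothesis does not control; controlling it in turn requires the third-order coefficient, and so on. So the induction as you have formulated it does not close, and the ``bookkeeping'' you defer is not routine --- it is the entire content of the lemma. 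The same uncontrolled data reappears in your final step: the assertion that $\rho_{n-1}+s_n^{(n-1)}e^{-iKy}$ vanishes when all of $g,g',g'',\dots$ are set to zero concerns $k=n-1$, one step beyond the range $k\le n-2$ of the identity you carry, and deriving it again requires $\mu_{n-2}|_{g=0}$.

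For comparison, the paper does not attempt any of this: its proof of Lemma~\ref{lemma:Sn-der} is a case-by-case symbolic verification of the claimed form for each $n\le25$ and $0\le k\le n$, using only $B'=B/g$ and $J_K'=e^{-iKy}B$ with the help of algebraic computing, and the cutoff at $n=25$ exists precisely because no uniform-in-$n$ argument was found (the paper notes that such an argument would upgrade the $C^{25}$ regularity of $R_K$ at $y=0$ to smoothness). Your structural approach is therefore more ambitious than the paper's and would strengthen the result if completed, but the decisive step is missing. A fallback fully consistent with your setup --- and essentially equivalent to what the paper does --- is to run your recursions symbolically for each $n\le25$ and verify the divisibility and the limit $r_{K,n,n}\to0$ case by case.
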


\begin{proof}
Starting from \eqref{eq:J-and-S:def} 
and using repeatedly \eqref{eq:appBident-Bprime}
and the identity 
\begin{align}
J'_K(y) &= e^{-iKy} \, B(y)
\,,
\label{eq:appBident-Jprime}
\end{align}
we have verified the claim case by case for each 
$n$ and~$k$, 
with the help of algebraic computing. 
\end{proof}

\begin{prop}
For $K\in\BbbR$ and $n \in \{0,1,2,\ldots,25\}$, 
$S_K^{(n)}(y) \to 0$ 
as $y\to0_+$. 
\label{prop:Sn-limit} 
\end{prop}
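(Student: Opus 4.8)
The plan is to run an induction on $n$ whose base case is $n=0$, and whose inductive content is packaged entirely into the structural formula of Lemma~\ref{lemma:Sn-der}, combined with $n$-fold application of L'H\^opital's rule. The decay rate of $B$ near $y=0$ is too fast for a naive Taylor argument to work, so the whole point is to match rates, which is exactly what L'H\^opital does.

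First I would dispose of $n=0$. Since $S_K(y) = J_K(y)/B(y)$ with $J_K(y)\to0$ and $B(y)\to0$ as $y\to0_+$ (the latter by Proposition~\ref{prop:Bn}), L'H\^opital together with \eqref{eq:appBident-Bprime} and \eqref{eq:appBident-Jprime} gives
\[
\lim_{y\to0_+} S_K(y) = \lim_{y\to0_+}\frac{J_K'(y)}{B'(y)} = \lim_{y\to0_+} \frac{e^{-iKy}B(y)}{B(y)/g(y)} = \lim_{y\to0_+} e^{-iKy}\,g(y) = 0,
\]
because $g(y)\to0$. (As $S_K$ is complex-valued, L'H\^opital is applied to real and imaginary parts separately throughout.) I would also record the facts, used repeatedly, that $g$ and all its derivatives tend to $0$ as $y\to0_+$ (as in the proof of Proposition~\ref{prop:Bn}) while $e^{-iKy}$ stays bounded; hence every polynomial in $g$, its derivatives and $e^{-iKy}$ is bounded near $y=0$. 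In particular the coefficients $r_{K,n,k}$ and $s_{K,n,k}$ of Lemma~\ref{lemma:Sn-der} are bounded, so that $h_{K,n}^{(k)}(y) = r_{K,n,k}(y)B(y) + s_{K,n,k}(y)J_K(y)\to0$ for every $0\le k\le n$.

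For $n\ge1$ I would start from the representation $S_K^{(n)} = h_{K,n}/(Bg^n)$ of \eqref{eq:Sn-formula}. The denominators generated by differentiation are controlled by the identity $B'=B/g$ from \eqref{eq:appBident-Bprime}: by induction on $j$ one obtains $(Bg^n)^{(j)} = B\,g^{\,n-j}\,Q_{n,j}$, where $Q_{n,j}$ is a polynomial in $g$ and its derivatives with $Q_{n,j}(y)\to1$. Since $B>0$ and $g>0$ on $(0,1)$, each $(Bg^n)^{(j)}$ is therefore strictly positive in a punctured right neighbourhood of $0$, so it is an admissible L'H\^opital denominator, and it tends to $0$ for $0\le j\le n$ because $B\to0$. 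Together with $h_{K,n}^{(j)}\to0$ this makes every intermediate quotient genuinely of the form $0/0$, so $n$-fold L'H\^opital is legitimate and reduces the limit to
\[
\lim_{y\to0_+}\frac{h_{K,n}^{(n)}(y)}{(Bg^n)^{(n)}(y)}
= \lim_{y\to0_+}\frac{r_{K,n,n}(y)B(y) + s_{K,n,n}(y)J_K(y)}{B(y)\,Q_{n,n}(y)}
= \lim_{y\to0_+}\frac{r_{K,n,n}(y) + s_{K,n,n}(y)\,S_K(y)}{Q_{n,n}(y)}.
\]
Here $Q_{n,n}\to1$, $s_{K,n,n}$ is bounded and $S_K\to0$ by the $n=0$ case, while $r_{K,n,n}\to0$ by the final clause of Lemma~\ref{lemma:Sn-der}; hence this limit is $0$. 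Reading the L'H\^opital chain backwards from the top then yields $\lim_{y\to0_+}S_K^{(n)}(y)=0$.

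The main obstacle is the bookkeeping that licenses the repeated L'H\^opital, namely verifying that the successive denominators $(Bg^n)^{(j)}$ do not vanish near $0$ and that each stage is truly $0/0$; this is precisely what the factorisation $(Bg^n)^{(j)}=B\,g^{\,n-j}Q_{n,j}$ with $Q_{n,j}\to1$ supplies, and it rests on the ODE $B'=B/g$. I would stress that a cruder estimate fails: because $B$ decays faster than any power of $y$, the ratio $y^n/(Bg^n)$ diverges, so one cannot merely combine the vanishing of $h_{K,n}^{(k)}(0)$ with Taylor's theorem. The decisive input is the vanishing of the top coefficient $r_{K,n,n}$, without which the final quotient would approach a nonzero constant rather than $0$.
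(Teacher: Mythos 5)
Your proposal is correct and follows essentially the same route as the paper: L'H\^opital on $J_K/B$ via the identities $B'=B/g$ and $J_K'=e^{-iKy}B$ for $n=0$, then $n$-fold L'H\^opital applied to the representation \eqref{eq:Sn-formula}, with the denominator controlled by the derivative identity for $B g^n$ and the limit closed by $r_{K,n,n}\to0$ together with $J_K/B = S_K \to 0$. You merely make explicit some bookkeeping (the factorisation $(Bg^n)^{(j)} = B g^{n-j}Q_{n,j}$ and the verification of the $0/0$ hypotheses) that the paper leaves implicit.
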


\begin{proof}
Consider~$S_K$. We use in \eqref{eq:J-and-S:def:onlyS} 
L'H\^opital with \eqref{eq:appBident-Bprime} 
and~\eqref{eq:appBident-Jprime}, obtaining 
$\lim_{y\to0_+}S(y) =  \lim_{y\to0_+} J'(y)/B'(y) 
= \lim_{y\to0_+} e^{-iK y} \, g(y) =0$.  

Consider then the derivatives of~$S_K$. 
From \eqref{eq:appBident-Bprime} we have 
\begin{align}
\frac{d}{dy} \! \left[ B(y) \bigl(g(y)\bigr)^n \right]
= B(y) \bigl(g(y)\bigr)^{n-1} \bigl(1 + n g'(y)\bigr)
\,. 
\end{align}
By Lemma~\ref{lemma:Sn-der}, we may hence evaluate 
$\lim_{y\to0_+}S_K^{(n)}(y)$ for $n\ge1$ 
by applying L'H\^opital to \eqref{eq:Sn-formula} $n$ times, 
using after the $n$th differentiation $\lim_{y\to0_+} J_K(y)/B(y) 
= \lim_{y\to0_+}S_K(y) =0$. 
\end{proof}

We stopped Lemma \ref{lemma:Sn-der} 
at $n=25$ because of computing time limitations in the case-by-case proof. 
If Lemma \ref{lemma:Sn-der} extends to 
$n \in \BbbN$, the proof of Proposition \ref{prop:Sn-limit} 
generalises to $n \in \BbbN$ and implies 
smoothness of $R_K(y)$ at $y=0$.

\section{$\langle T_{00} \rangle$ at late times\label{app:T00-late}}

In this appendix we verify the properties of $\langle T_{00} \rangle$
quoted in Sections \ref{subsec:T00-regions} and \ref{subsec:rapidchange}
in the late time region, $t > r + \lambda^{-1}$. 

Let $t > r + \lambda^{-1}$. 
From  {the last line of} \eqref{eq:R-sol-full} we see that the 
first term in \eqref{T00all-final}
vanishes. It hence suffices to consider 
$\mathcal{G}_\lambda$~\eqref{eq:calG-def}, 
which  {by the last line of \eqref{eq:R-sol-full} reduces to} 
\begin{align}
\mathcal{G}_\lambda(t,r)
& = 
4 \int_0^\infty 
\frac{dK}{K}
\Bigl[
|C_K|^2 + \cos(2K\lambda r)
- 
\left(C_K e^{i K \lambda t} + \overline{C}_K e^{-i K \lambda t} \right) 
\cos(K\lambda r)
\Bigr]
\,, 
\label{eq:calG-future}
\end{align}
where the integral is convergent 
(at large $K$ in the sense of an improper Riemann integral) 
by the properties of $C_K$ noted 
in Section~\ref{subsec:classical-modefunctions}: 
$C_K$ is smooth in~$K$, $C_0=1$, and 
$C_K \to 0$ faster than any inverse power of $K$ as $K\to\infty$. 

Rearranging the integrand in \eqref{eq:calG-future} gives 
\begin{align}
\mathcal{G}_\lambda(t,r)
& = 
4\int_0^\infty 
\frac{dK}{K}
\Bigl[
(1-C_K) \,e^{iK\lambda t} 
+ (1-\overline{C}_K) \,e^{-iK\lambda t}
\Bigr]
\cos(K\lambda r)
\notag 
\\[1ex]
& \hspace{2ex}
+ 2\int_0^\infty 
\frac{dK}{K}
\Bigl[
|C_K|^2 - \cos\bigl(K\lambda(t+r)\bigr)
\Bigr]
\notag 
\\[1ex]
& \hspace{2ex}
+ 2\int_0^\infty 
\frac{dK}{K}
\Bigl[
|C_K|^2 - \cos\bigl(K\lambda(t-r)\bigr)
\Bigr]
\notag 
\\[1ex]
& \hspace{2ex}
+ 2\int_0^\infty 
\frac{dK}{K}
\Bigl[
\cos\bigl(2K\lambda r \bigr)
 - \cos\bigl(K\lambda(t+r)\bigr)
\Bigr]
\notag 
\\[1ex]
& \hspace{2ex}
+ 2\int_0^\infty 
\frac{dK}{K}
\Bigl[
\cos\bigl(2K\lambda r \bigr)
 - \cos\bigl(K\lambda(t-r)\bigr)
\Bigr]
\,.
\label{eq:calG-break-future}
\end{align}
The integrals can be evaluated 
by the formulas of Appendix~\ref{app:integrals}, 
with the result 
\begin{align}
\mathcal{G}_\lambda(t,r)
& = 2H \bigl(\lambda (t + r)\bigr) 
+ 2H \bigl(\lambda (t - r)\bigr)
+   {2 H \bigl(\lambda (t + r)\bigr) 
+ 2 H \bigl(\lambda (t - r)\bigr)}
\notag 
\\[1ex]
& \hspace{2ex} 
+ 4 \ln\!\left(\frac{\lambda(t^2-r^2)}{r}\right)
- 4\ln2 + 4k_1 
\,, 
\label{eq:calG-future-Hsimp}
\end{align}
where the function $H$ is defined in 
Proposition~\ref{eq:Hfunc-def} and 
the constant $k_1$ 
is given by~\eqref{eq:k1-def}. 

The observations in 
Sections \ref{subsec:T00-regions} and \ref{subsec:rapidchange} 
about $\langle T_{00} \rangle$ at $t >  r + \lambda^{-1}$ 
follow from \eqref{eq:calG-future-Hsimp} 
by Proposition~\ref{prop:H-properties}.

\section{$\langle T_{00} \rangle$ at intermediate times\label{app:T00-intermediate}}

In this appendix we verify the properties of 
$\langle T_{00} \rangle$
quoted in Section \ref{subsec:T00-regions} 
in the intermediate time region, $r \le t \le r + \lambda^{-1}$. 

\subsection{Preliminaries} 

For $r <  t < r + \lambda^{-1}$, the integrals in
\eqref{eq:calG-def} and in the first term in 
\eqref{T00all-final} are convergent 
because 
\eqref{eq:R-sol-full} implies for fixed $y\in(0,1)$ the small $K$ estimates 
\begin{subequations}
\begin{align}
R_K(y) &=-1+O(K)\,,
\\
|R'_K(y)|^2 & =O\bigl(K^2\bigr)\,, 
\end{align}
\end{subequations}
and the large $K$ estimates 
\begin{subequations}
\begin{align}
R_K(y) &= e^{-iKy}\bigg[1+2\frac{B'(y)}{B(y)}\frac{1}{iK}+O\bigl(K^{-2}\bigr)\bigg]
\,,
\\
|R_K(y)|^2 &= 1+O\bigl(K^{-2}\bigr) 
\,,
\\
|R'_K(y)|^2 & =K^2+O\bigl(K^{-2}\bigr) 
\,.
\end{align}
\end{subequations}

For $t=r$, the integrands in \eqref{eq:calG-def} and in the first term of 
\eqref{T00all-final} vanish. 

For $t = r + \lambda^{-1}$, the integrand in \eqref{T00all-final} vanishes, while 
\eqref{eq:calG-def} is given by 
\eqref{eq:calG-future} with $t = r + \lambda^{-1}$, and all the steps from 
\eqref{eq:calG-future} to \eqref{eq:calG-future-Hsimp} 
still hold with $t = r + \lambda^{-1}$. 

Collecting, we see that 
$\mathcal{G}_\lambda(t,r)$ \eqref{eq:calG-def} and the first term in 
\eqref{T00all-final} are well defined everywhere in 
$r \le  t \le r + \lambda^{-1}$. 

What remains is to examine the existence and continuity of 
$\partial_r\mathcal{G}_\lambda(t,r)$, and the continuity of the 
first term in~\eqref{T00all-final}. We address each in turn. 

%we need to investigate $\partial_r\mathcal{G}_\lambda(t,r)$. 
%We shall do this below in Section~\ref{app:G-diff}. 
%
%To examine the total energy on a hypersurface of constant~$t$, 
%we need in addition to estimate the first term in \eqref{T00all-final} for 
%$r <  t < r + \lambda^{-1}$. 
%We shall do this below in Section~\ref{app:T00-estimate}. 

\subsection{$\partial_r\mathcal{G}_\lambda(t,r)$\label{app:G-diff}}

We show first that $\partial_r\mathcal{G}_\lambda(t,r)$ exists and is continuous
in $r$ for $0 < r < t$, for each positive~$t$. 
We then assume that $g'''(y)\ge0$ 
for sufficiently small positive~$y$, and show that 
$\partial_r \mathcal{G}_\lambda(t,r) \to 0$ as $r \to t_-$. 
This establishes that the second term in \eqref{T00all-final} 
exists and is continuous in~$r$. 

We introduce dimensionless variables by 
$\lambda t = \TT>0$ and $\lambda r = \TT - y$, where $0<y < \TT$. 
The quantity of interest is then 
$\mathcal{G}_\lambda\bigl(\TT/\lambda ,(\TT-y)/\lambda\bigr) = F_-(y) + F_+(y)$, 
where 
\begin{subequations}
\begin{align}
F_-(y)& = 
\int_0^1
\frac{dK}{K}
\Bigl[ 
\bigl|R_K(y) \bigr|^2 
+ 2 \cos\bigl(2 K (\TT -y)\bigr) - 1 
\notag
\\[1ex]
& \hspace{14ex}
+ R_K(y) e^{i K  (2 \TT -y)}
+ \overline{R_K(y)} e^{-i K  (2 \TT -y)}
\Bigr]
\,,
\label{eq:Fminus-def}
\\[1ex]
F_+(y)& = 
\int_1^\infty
\frac{dK}{K}
\Bigl[ 
\bigl|R_K(y) \bigr|^2 
+ 2 \cos\bigl(2 K (\TT -y)\bigr) - 1 
\notag
\\[1ex]
& \hspace{14ex}
+ R_K(y) e^{i K  (2 \TT -y)}
+ \overline{R_K(y)} e^{-i K  (2 \TT -y)}
\Bigr]
\,, 
\label{eq:Fplus-def}
\end{align}
\end{subequations}
and the notation suppresses the dependence of 
$F_\pm$ on~$\TT$. 

In $F_-$, using \eqref{eq:R-sol-aalt} gives 
\begin{align}
F_-(y)& = 
2\int_0^1
dK 
\Bigl[ 
i \left( e^{iKy} - e^{iK(2\TT-y)}\right) S_K(y)
- i \left( e^{-iKy} - e^{-iK(2\TT-y)}\right) \overline{S_K(y)}
\notag
\\
& 
\hspace{14ex}
+ 2 \bigl|S_K(y) \bigr|^2 
\Bigr]
\,. 
\end{align}
Straightforward convergence estimates show that $F_-(y)$ is $C^1$ for $y>0$, 
and estimates using Proposition 
\ref{prop:Sn-limit}
show that $F'_-(y) \to 0$ as $y\to0$. 

In $F_+$, we use the identity 
\begin{align}
R_K(y) = 
e^{-iK y} - \frac{2i}{K}
\left[
\frac{B'(y)}{B(y)} e^{-iKy} - V_K(y)
\right]
\,,
% \label{eq:R-sol-full}
\end{align}
where 
\begin{align}
V_K(y)
= \frac{1}{B(y)}
\int_0^{y} {B''}(z) \, e^{-iK z} \, dz 
\,, 
\label{eq:V-def}
\end{align}
obtained by integrating \eqref{eq:R-sol-full} 
by parts. 
This gives 
\begin{align}
F_+(y)& = 
2\int_1^\infty 
dK 
\Biggl\{ 
\frac{2}{K^3}\left(\frac{B'(y)}{B(y)}\right)^2 
+ 
\frac{2}{K}\cos\bigl(2K(\TT-y)\bigr)
+ \frac{2}{K^2}\frac{B'(y)}{B(y)} \sin\bigl(2K(\TT-y)\bigr)
\notag
\\[1ex]
& \hspace{14ex}
+ \left[
- \frac{2}{K^3}\frac{B'(y)}{B(y)} e^{iKy}
+ \frac{i}{K^2}e^{iKy}
+ \frac{i}{K^2}e^{iK(2\TT-y)}
\right]V_K(y)
\notag
\\[1ex]
& \hspace{14ex}
+ \left[
- \frac{2}{K^3}\frac{B'(y)}{B(y)} e^{-iKy}
- \frac{i}{K^2}e^{-iKy}
- \frac{i}{K^2}e^{-iK(2\TT-y)}
\right]\overline{V_K(y)}
\notag
\\[1ex]
& \hspace{14ex}
+ \frac{2}{K^3} \bigl|V_K(y) \bigr|^2 
\Biggr\}
\,, 
\label{eq:Fplus-V}
\end{align}
from which straightforward estimates show that 
$F_+(y)$ is $C^1$ for $y>0$. 

To examine $F_+(y)$ and $F'_+(y)$ as $y\to0$, 
we evaluate the integral over $K$ in~\eqref{eq:Fplus-V}. 
In the terms that do not involve $V_K$, the integral over 
$K$ produces elementary functions and the 
cosine integral~$\Ci$~\cite{dlmf}. 
In the terms that involve~$V_K$, we use~\eqref{eq:V-def}, 
we interchange the integrations as justified by 
the absolute convergence of the multiple integral, and we evaluate first the 
integral over $K$ in terms of elementary functions 
and the exponential integral $E_1$~\cite{dlmf}. 
Among the terms that ensue, several have $B'$ or~$B''$ under an integral; 
however, integration by parts reduces most of 
these terms to combinations that involve
$S_1(y)$ and $T_1(y)$, where 
\begin{align}
T_K(y)
= \frac{1}{B(y)}
\int_0^{y} B(z) \, z\,  e^{-iK z} \, dz 
\,, 
\label{eq:T-def}
\end{align}
and the small $y$ behaviour of these terms and their derivatives 
can be analysed by Proposition \ref{prop:Sn-limit} 
and its generalisations. 
 {We find that $F_+$ decomposes as 
$F_+(y) = F_{+1}(y) + F_{+2}(y)$, 
where we omit the lengthy expression for $F_{+1}(y)$ 
but just note that it satisfies 
$F_{+1}(y)\to0$ and $F'_{+1}(y)\to0$ as $y\to0$, 
while the expression for $F_{+2}(y)$
for $y<1$ reads} 
%  {[FROM HERE ON NOTATION FOR $F_{+2}(y)$ FIXED WITHOUT EXPLICIT COMMENT]}
\begin{align}
F_{+2}(y)
= \frac{4}{B^2(y)}
\int_0^y dz \, B'(z) \int_0^z dt \cos t
\, B'(z-t) \, \frac{g(z) - g(z-t)}{t}
\,. 
\label{eq:F+2-g}
\end{align}

To control $F_{+2}(y)$, we introduce the additional 
technical assumption that $g'''(y)\ge0$ 
for sufficiently small positive~$y$. 
For sufficiently small positive~$y$, 
an elementary analysis then gives for $t\in[0,y]$ the inequalities 
\begin{subequations}
\label{eq:g-ineqs}
\begin{align}
& \frac{g'(y)}{y} \le \frac{g'(y) - g'(y-t)}{t} \le g''(y)
\,,
\label{eq:g-ineq1}
\\[1ex]
& \frac{g(y)}{y} \le \frac{g(y) - g(y-t)}{t} \le g'(y)
\,, 
\label{eq:g-ineq0}
\end{align}
\end{subequations}
understood at $t=0$ in the limiting sense. 
From now on we assume $y<1$ and 
so small that \eqref{eq:g-ineqs} hold. 

Consider now $F_{+2}(y)$. 
Applying L'H\^opital in \eqref{eq:F+2-g} and using~\eqref{eq:g-ineq0}, 
we find that 
$F_{+2}(y)\to0$ as $y\to0$. 

Consider then $F'_{+2}(y)$. 
Differentiating \eqref{eq:F+2-g} gives 
\begin{align}
F'_{+2}(y)
&= \frac{4}{g(y)B^2(y)}
\Biggl[
B(y) 
\int_0^y dt \cos t
\, B'(y-t) \, \frac{g(y) - g(y-t)}{t}
\notag
\\[1ex]
& \hspace{15ex}
- 2 \int_0^y dz \, B'(z) \int_0^z dt \cos t
\, B'(z-t) \, \frac{
g(z) - g(z-t)}{t}
\Biggr]
\,. 
\label{eq:F+2diff-def}
\end{align}
For the limit of $F'_{+2}(y)$ as $y\to0$, 
L'H\^opital shows that it suffices to consider 
\begin{align}
\frac{2}{g(y)B(y)}
& 
\int_0^y dt \cos t
\biggl[
- B'(y-t) \, \frac{g(y) - g(y-t)}{t}
%\notag
%\\[1ex]
%& \hspace{12ex}
+ g(y) B''(y-t) \, \frac{g(y) - g(y-t)}{t}
\notag
\\[1ex]
& \hspace{12ex}
+ g(y) B'(y-t) \, \frac{g'(y) - g'(y-t)}{t}
\Biggr] 
\,.
\label{eq:F+2diff-LH}
\end{align}
The last term in \eqref{eq:F+2diff-LH} can be controlled 
by~\eqref{eq:g-ineq1}. 
The combination of the first two terms can be controlled by 
taking $y$ to be so small that $g'< 1$, writing $B' = g B'' /(1-g')$, 
and using \eqref{eq:g-ineq0} and the monotonicity of~$g'$. 
We find that $F'_{+2}(y)\to0$ as $y\to0$. 

 {Combining these results shows that 
$\partial_r\mathcal{G}_\lambda(t,r)$ is continuous
in $r$ for $0 < r \le t$. 
This establishes that the second term in \eqref{T00all-final} 
exists at each point and is continuous in~$r$.}

\subsection{\eqref{T00all-final} first term\label{app:T00-estimate}}

To analyse the first term in \eqref{T00all-final}, it suffices to consider 
$\tilde F(y)= \tilde F_-(y) + \tilde F_+(y)$, where 
$y>0$ and 
\begin{align}
\tilde F_-(y)
& =
\int_0^1 \frac{dK}{K}
\left[ \bigl|R'_K(y) \bigr|^2 -K^2 \right]
\,,
\\[1ex]
\tilde F_+(y)
& =
\int_1^\infty \frac{dK}{K}
\left[ \bigl|R'_K(y) \bigr|^2 -K^2 \right]
\,. 
\end{align}
We show first that $\tilde F(y)$ is continuous for $y>0$. 
We then assume that $g'''(y)\ge0$ 
for sufficiently small positive~$y$, and show that 
$\tilde F(y) \to - \infty$ as $y\to0$, faster than any negative multiple of $1/g(y)$. 

In $\tilde F_-$, we use \eqref{eq:R-sol-aalt} 
and proceed as with $F_-$~\eqref{eq:Fminus-def}. We find that 
$\tilde F_-(y)$ is continuous for $y>0$ and $\tilde F_-(y) \to 0$ as $y\to0$. 

In $\tilde F_+$, we start as with $F_+$~\eqref{eq:Fplus-def}, finding 
\begin{align}
\tilde F_+(y)& = 
2\int_1^\infty 
dK 
\Biggl\{ 
\frac{2}{K^3}\left(\frac{B'(y)}{B(y)}\right)^4
+ \frac{2}{K^3} \left(\frac{B'(y)}{B(y)}\right)^2 \bigl|V_K(y) \bigr|^2 
\notag
\\[1ex]
& \hspace{14ex}
- \frac{2}{K^3} \left(\frac{B'(y)}{B(y)}\right)^3
\left[
e^{iKy} \, V_K(y) + e^{-iKy} \, \overline{V_K(y)}
\right]
\notag
\\[1ex]
& \hspace{14ex}
+ \frac{2i}{K^2} \left(\frac{B'(y)}{B(y)}\right)^2
\left[
e^{iKy} \, V_K(y) - e^{-iKy} \, \overline{V_K(y)}
\right]
\notag
\\[1ex]
& \hspace{14ex}
- \frac{i}{K^2} 
% \left(
\frac{B'(y)}{B(y)}
% \right)
\left[
e^{iKy} \, W_K(y) - e^{-iKy} \, \overline{W_K(y)}
\right]
\Biggr\}
\,, 
\label{eq:tildeFplus-VW}
\end{align}
where $V_K$ is given by \eqref{eq:V-def}
and 
\begin{align}
W_K(y)
= \frac{1}{B(y)}
\int_0^{y} {B'''}(z) \, e^{-iK z} \, dz 
\,. 
\label{eq:W-def}
\end{align}
 {This shows that $\tilde F_+(y)$ is continuous for $y>1$.}

Proceeding as with 
\eqref{eq:Fplus-V}, and assuming $y<1$, 
we find 
%  {[FROM HERE ON NOTATION FOR 
%$\tilde F_{+1}(y)$ AND $\tilde F_{+2}(y)$ 
%FIXED WITHOUT EXPLICIT COMMENT]}
$\tilde F_+(y) = \tilde F_{+1}(y) + \tilde F_{+2}(y)$, 
where 
 {we omit the lengthy expression for $\tilde F_{+1}(y)$ 
but just note that it satisfies} 
$\tilde F_{+1}(y)\to0$ as $y\to0$, and 
\begin{align}
\tilde F_{+2}(y)
= \frac{4}{g^2(y)B^2(y)}
\left[ 
\int_0^y dz \, B'(z) J(z) \ - B(y) J(y)
\right] 
\,, 
\label{eq:Ftilde+2def}
\end{align}
where 
\begin{align}
J(y) = \int_0^y dt \cos t
\, B'(y-t) \, \frac{g(y) - g(y-t)}{t}
\,. 
\label{eq:Jbare-def}
\end{align}

No assumptions about the sign of $g'''(y)$ have been made yet. 
We now assume that $g'''(y)\ge0$ for sufficiently small positive~$y$, 
and we take $y$ to be so small that 
\eqref{eq:g-ineqs} hold, 
$\cos y \ge 1/2$, 
and $g' \le 1/2$, the last of which implies $B''>0$. 
Differentiating \eqref{eq:Jbare-def} and using~\eqref{eq:g-ineqs}, 
we then have 
$J'(y) \ge \tfrac12 B(y)/y$. 
Using \eqref{eq:Ftilde+2def}, and noting that the square brackets therein 
have the derivative $- B(y) J'(y)$, L'H\^opital hence shows that 
$g(y) \tilde F_{+2}(y) \to -\infty$ as $y\to0$. 

Collecting, these observations show that $\tilde F(y)$ is continuous for 
$y>0$, but 
 {$\tilde F(y) \to -\infty$ as $y\to0$, 
faster than any negative multiple of $1/g(y)$.}

\section{Integrals\label{app:integrals}}

In this appendix we collect results about integrals that appear in 
Section \ref{sec:UDWdetector} and Appendix~\ref{app:T00-late}. 
We recall that $C_K$ \eqref{eq:CsubK-def} is smooth in~$K$, it falls 
off at large $K$ faster than any inverse power of~$K$, and $C_0=1$. 

\begin{prop}
For $\alpha,\beta>0$, we have 
\begin{subequations}
\begin{align}
&\int_0^\infty \frac{dK}{K} \left( e^{i\alpha K} - e^{i\beta K}\right) = 
\ln(\beta/\alpha)
\,, 
\label{eq:aux-identity-1}
\\[2ex]
& \int_0^\infty \frac{dK}{K} \left( e^{i\alpha K} - e^{- i\beta K}\right) = 
\ln(\beta/\alpha) + i \pi
\,,
\label{eq:aux-identity-3}
\\[2ex]
&\int_0^\infty 
\frac{dK}{K}
\Bigl[
|C_K|^2 - \cos(\alpha K)
\Bigr]
= 
\ln\alpha + k_1
\,,
\label{eq:aux-identity-2a}
\end{align}
\end{subequations}
where   {the integrals are improper Riemann integrals,}
\begin{align}
k_1 = 
\gamma + 
\int_0^1
\frac{dK}{K}
\bigl(
|C_K|^2 - 1 \bigr)
+ 
\int_1^\infty
\frac{dK}{K}\, 
|C_K|^2
\label{eq:k1-def}
\end{align}
and $\gamma$ is Euler's constant. 
\label{prop:integrals123}
\end{prop}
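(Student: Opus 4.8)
The plan is to reduce all three identities to classical Frullani-type integrals and the Dirichlet integral, treating every oscillatory tail as an improper Riemann integral throughout.

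For \eqref{eq:aux-identity-1} and \eqref{eq:aux-identity-3} I would split each integrand into its real and imaginary parts. In both cases the real part is $\bigl[\cos(\alpha K) - \cos(\beta K)\bigr]/K$, whose integral is $\ln(\beta/\alpha)$ by the cosine Frullani integral $\int_0^\infty K^{-1}\bigl[\cos(aK) - \cos(bK)\bigr]\,dK = \ln(b/a)$, valid for $a,b>0$. The imaginary parts are $\bigl[\sin(\alpha K) \mp \sin(\beta K)\bigr]/K$, and the Dirichlet integral $\int_0^\infty K^{-1}\sin(aK)\,dK = \pi/2$ for $a>0$ makes these contribute $0$ in \eqref{eq:aux-identity-1} and $\pi$ in \eqref{eq:aux-identity-3}, giving the stated right-hand sides.

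For \eqref{eq:aux-identity-2a} the key manoeuvre is to isolate the $\alpha$-dependence by adding and subtracting $\cos K$, writing
\[
\int_0^\infty \frac{|C_K|^2 - \cos(\alpha K)}{K}\,dK
= \int_0^\infty \frac{\cos K - \cos(\alpha K)}{K}\,dK
+ \int_0^\infty \frac{|C_K|^2 - \cos K}{K}\,dK
\,.
\]
The first integral equals $\ln\alpha$ by the cosine Frullani integral, and the second is an $\alpha$-independent constant that I claim is $k_1$. To evaluate it I would split at $K=1$, peel off the pieces $\int_0^1 K^{-1}\bigl(|C_K|^2 - 1\bigr)\,dK$ and $\int_1^\infty K^{-1}|C_K|^2\,dK$ — both convergent, since the smoothness of $C_K$ together with $C_0=1$ controls the $K\to0$ endpoint while the rapid decay of $C_K$ controls $K\to\infty$ — and recognise the leftover cosine pieces as $\Cin(1) + \Ci(1)$, which equals $\gamma$ by the identity $\Ci(x) = \gamma + \ln x - \Cin(x)$ evaluated at $x=1$. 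This reproduces \eqref{eq:k1-def}.

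The main obstacle is the conditional convergence at large $K$: the cosine tails converge only as improper Riemann integrals, not absolutely, so the additive splittings above are legitimate only once each resulting piece is known to converge in the improper sense. I would address this by always pairing an oscillatory $\cos(\alpha K)/K$ tail with a matching $\cos K/K$ or $|C_K|^2/K$ tail, performing every rearrangement on a combination whose tail is either absolutely convergent or a recognised conditionally convergent integral, and invoking the Frullani and Dirichlet evaluations only after the splitting has produced individually meaningful integrals.
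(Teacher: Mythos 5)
Your proof is correct. For \eqref{eq:aux-identity-2a} it coincides with the paper's own argument: split the constant piece at $K=1$, express the leftover cosine contributions through $\Cin$ and $\Ci$, and use $\Cin(1)+\Ci(1)=\gamma$ to land on~\eqref{eq:k1-def}. For \eqref{eq:aux-identity-1} and \eqref{eq:aux-identity-3} you take a mildly different and arguably more elementary route: the paper inserts a low-$K$ cutoff, expresses each term through the exponential integral $E_1$, and removes the cutoff via the small-argument expansion of~$E_1$, whereas you split into real and imaginary parts and quote the Frullani cosine integral and the Dirichlet integral directly, with no regulator. The two computations are equivalent; yours avoids complex special functions at the cost of having to justify the splitting, which you do correctly --- each sine term is an individually convergent improper integral (the integrand is bounded at $K=0$), while the cosine terms must be kept as a single Frullani combination, exactly as you arrange. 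The convergence discussion in your final paragraph addresses the only genuine subtlety, the conditional convergence of the oscillatory tails.
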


\begin{proof}
In \eqref{eq:aux-identity-1} and~\eqref{eq:aux-identity-3}, 
we insert a low $K$ cutoff, 
express the integral of each term 
in terms of the exponential integral~$E_1$~\cite{dlmf}, 
and use small argument form of $E_1$ to remove the cutoff. 

In~\eqref{eq:aux-identity-2a}, we 
break the integral 
into the subintervals $0<K<1$ and $1 < K < \infty$, 
express the contributions from the subintervals 
in terms of the cosine integrals $\Cin$ and $\Ci$~\cite{dlmf}, 
and use the cosine integral identities~\cite{dlmf}. 
Note that $k_1$ is finite because of the small 
and large $K$ properties of~$C_K$. 
\end{proof}

  %{FOLLOWING THEOREM AND ITS PROOF EXPANDED}

\begin{prop}
For $\alpha>0$, let 
\begin{align}
H (\alpha) := \int_0^\infty 
\frac{dK}{K}
(1-C_K) \,e^{i\alpha K}
\,, 
\label{eq:Hfunc-def}
\end{align}
where the integral is an improper Riemann integral. 
Then 
\begin{align}
H (\alpha) = 
\begin{cases}
\displaystyle{- \int_0^1
dz \, 
\frac{B(\alpha) - B(z)}{\alpha-z}
+ \bigl(1 - B(\alpha)\bigr) \bigl( \ln(\alpha^{-1} -1) + i \pi\bigr)}
& \text{for $0<\alpha < 1$;}
\\[3ex]
\displaystyle{- \int_0^1
dz \, 
\frac{B(\alpha) - B(z)}{\alpha-z}}
& \text{for $\alpha\ge1$.}
\end{cases}
\label{eq:Hfunc-final}
\end{align}
It follows that $H$ is $C^\infty$, $H(\alpha)$ is real for $\alpha\ge1$, and $H(\alpha)$
for $\alpha>1$ has the absolutely convergent series representation 
\begin{align}
H(\alpha) = 
- \sum_{p=0}^\infty \frac{1}{\alpha^{p+1}} \int_0^1 dz \, z^p \bigl(1 - B(z) \bigr)
\,. 
\label{eq:H-largearg}
\end{align}
\label{prop:H-properties} 
\end{prop}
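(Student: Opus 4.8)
The plan is to collapse the oscillatory $K$-integral in \eqref{eq:Hfunc-def} onto an elementary $z$-integral, using the integral representation \eqref{eq:CsubK-def} of $C_K$. Since $\int_0^1 B'(z)\,dz = B(1) - B(0) = 1$ by \eqref{eq:B-sol}, I would first write $1 - C_K = \int_0^1 B'(z)\bigl(1 - e^{-iKz}\bigr)\,dz$, so that the integrand of \eqref{eq:Hfunc-def} becomes $\int_0^1 B'(z)\,\bigl(e^{i\alpha K} - e^{i(\alpha-z)K}\bigr)K^{-1}\,dz$. The $K$-integral is then improper only at $K\to\infty$ (the integrand tends to $iz$ as $K\to0$), and for each fixed $z$ its value is supplied by Proposition~\ref{prop:integrals123}: \eqref{eq:aux-identity-1} gives $\ln\bigl((\alpha-z)/\alpha\bigr)$ when $z<\alpha$, while \eqref{eq:aux-identity-3} gives $\ln\bigl((z-\alpha)/\alpha\bigr) + i\pi$ when $z>\alpha$.

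The crux is to justify exchanging the $z$- and $K$-integrations, since the $K$-integral converges only as a conditionally convergent improper Riemann integral. I would introduce a cutoff and set $I_\Lambda(z) := \int_0^\Lambda \bigl(e^{i\alpha K} - e^{i(\alpha-z)K}\bigr)K^{-1}\,dK$; Fubini applies on the finite rectangle, giving $\int_0^\Lambda \cdots\,dK = \int_0^1 B'(z)\,I_\Lambda(z)\,dz$. Writing $I_\Lambda(z) = \phi(\alpha\Lambda) - \phi\bigl((\alpha-z)\Lambda\bigr)$ with the entire function $\phi(x) := \int_0^x (e^{iu}-1)u^{-1}\,du$, and using $\phi(x) = -\ln(1+|x|) + g(x)$ with $g$ bounded, an elementary monotonicity analysis in $\Lambda$ shows $\sup_\Lambda |I_\Lambda(z)| \le \bigl|\ln(|\alpha-z|/\alpha)\bigr| + \mathrm{const}$. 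This bound is integrable on $(0,1)$ and $B'$ is bounded, so dominated convergence lets me send $\Lambda\to\infty$ under the $z$-integral. The outcome is $H(\alpha) = \int_0^1 B'(z)\bigl[\ln|\alpha-z| - \ln\alpha\bigr]\,dz + i\pi\bigl(1 - B(\alpha)\bigr)$, where the $i\pi$ term arises from the portion $z\in(\alpha,1)$, equals $i\pi\int_\alpha^1 B'(z)\,dz = i\pi\bigl(1-B(\alpha)\bigr)$, and is nonzero only for $\alpha<1$.

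Next I would integrate by parts in $z$ to reach \eqref{eq:Hfunc-final}. For $\alpha\ge1$ there is no singularity in $(0,1)$, and a single integration by parts using $B(0)=0$, $B(1)=1=B(\alpha)$ turns $\int_0^1 B'(z)[\ln(\alpha-z)-\ln\alpha]\,dz$ into $-\int_0^1 \bigl(B(\alpha)-B(z)\bigr)/(\alpha-z)\,dz$. For $0<\alpha<1$ the factor $\ln|\alpha-z|$ has an integrable singularity at $z=\alpha$; splitting at $z=\alpha\mp\epsilon$, integrating by parts on each piece and letting $\epsilon\to0$ produces a principal value, while the boundary terms at $z=\alpha$ cancel because $[B(\alpha-\epsilon)-B(\alpha+\epsilon)]\ln\epsilon\to0$. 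Recombining the principal value with $\mathrm{P.V.}\int_0^1 (\alpha-z)^{-1}\,dz = -\ln(\alpha^{-1}-1)$ and with the $\ln\alpha$ and $i\pi$ terms reproduces exactly the first case of \eqref{eq:Hfunc-final}; I would present this as a short reconciliation rather than grind through it.

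Finally, the three consequences follow from \eqref{eq:Hfunc-final}. Reality for $\alpha\ge1$ is immediate, as $B$ is real and the $i\pi$ term is absent. For smoothness I would use Hadamard's lemma to write $\bigl(B(\alpha)-B(z)\bigr)/(\alpha-z) = \int_0^1 B'\bigl(z+s(\alpha-z)\bigr)\,ds$, which is jointly smooth, so differentiation under the integral sign shows the integral term is $C^\infty$ on $(0,\infty)$; the extra factor $\bigl(1-B(\alpha)\bigr)\bigl(\ln(\alpha^{-1}-1)+i\pi\bigr)$ present for $\alpha<1$ vanishes to all orders at $\alpha=1$, since $1-B(\alpha)$ does while $\ln(\alpha^{-1}-1)$ grows only logarithmically, so it glues $C^\infty$ onto the value $0$ on $[1,\infty)$ and the two branches define a single smooth function. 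For $\alpha>1$, where $B(\alpha)=1$, I would expand $(\alpha-z)^{-1} = \sum_{p\ge0} z^p \alpha^{-p-1}$, uniformly convergent for $z\in[0,1]$ because $z/\alpha\le 1/\alpha<1$, and integrate term by term; absolute convergence follows from $\bigl|\int_0^1 z^p(1-B(z))\,dz\bigr|\le 1/(p+1)$, giving \eqref{eq:H-largearg}. The main obstacle throughout is the interchange step of the second paragraph, controlled by the uniform-in-$\Lambda$ logarithmic bound on $I_\Lambda(z)$.
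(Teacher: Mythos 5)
Your argument is correct and reaches the paper's formula by the same underlying manoeuvre --- inserting $C_K=\int_0^1 B'(z)\,e^{-iKz}\,dz$, truncating the $K$-integral, applying Fubini on the finite rectangle, and then interchanging the order of integration --- but the execution differs in a genuine way. The paper splits $H$ into real and imaginary parts, evaluates the truncated integrals explicitly in terms of $\Si$, $\Ci$ and $\Cin$, and removes the cutoff using the Riemann--Lebesgue lemma together with $\Si(x)\to\pm\pi/2$; the integration by parts in $z$ happens inside that special-function computation. You instead keep the complex exponential intact, quote the already-established identities \eqref{eq:aux-identity-1} and \eqref{eq:aux-identity-3} for the pointwise limit, and control the interchange by the uniform-in-$\Lambda$ bound $\sup_\Lambda|I_\Lambda(z)|\le\bigl|\ln(|\alpha-z|/\alpha)\bigr|+\mathrm{const}$ (which is correct: $\Lambda\mapsto(1+\alpha\Lambda)/(1+|\alpha-z|\Lambda)$ is monotone between $1$ and $\alpha/|\alpha-z|$) followed by dominated convergence. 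This buys a cleaner intermediate formula, $H(\alpha)=\int_0^1 B'(z)\ln|\alpha-z|\,dz-\ln\alpha+i\pi\bigl(1-B(\alpha)\bigr)$, from which a single integration by parts gives \eqref{eq:Hfunc-final}; I have checked that your principal-value bookkeeping for $0<\alpha<1$ reproduces the $(1-B(\alpha))\ln(\alpha^{-1}-1)$ term exactly. Your Hadamard-lemma argument for smoothness, and the observation that $(1-B(\alpha))(\ln(\alpha^{-1}-1)+i\pi)$ glues $C^\infty$ to zero at $\alpha=1$ because $1-B$ vanishes there to all orders, is more than the paper supplies (it declares smoothness ``immediate'') and is a worthwhile addition.

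Two small points to tidy. First, your integration by parts degenerates at the endpoint $\alpha=1$, where the boundary term $B(1)\ln(\alpha-1)$ is not finite; either treat $\alpha=1$ by the same $\epsilon$-regularisation you use for $\alpha<1$ (the divergences cancel against $\int_0^{1-\epsilon}dz/(1-z)$ because $B(1-\epsilon)-1=O(\epsilon^\infty)$), or obtain it by continuity from $\alpha>1$. Second, for the gluing at $\alpha=1$ you should say explicitly that the derivatives of $\ln(\alpha^{-1}-1)$ grow only like inverse powers of $1-\alpha$, which are still beaten by the faster-than-any-power vanishing of $1-B(\alpha)$ and its derivatives; ``grows only logarithmically'' covers the function value but not its derivatives.
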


\begin{proof}
Consider first $\Imagpart H(\alpha)$. Taking the imaginary part of \eqref{eq:Hfunc-def}
under the integral, recalling that $\int_0^\infty dK \sin(\alpha K)/K = \pi/2$ 
(since $\alpha>0$ by assumption), and introducing a large $K$ cutoff~$M>0$, 
we have 
\begin{align}
\Imagpart H(\alpha) = 
\frac{\pi}{2} + \lim_{M\to\infty} I(M,\alpha)
\,,
\label{eq:Hfunc-impart-split}
\end{align}
where
\begin{align}
I(M,\alpha) \,\,&\!\!:= - \int_0^M
\frac{dK}{K} \int_0^1 dz \, B'(z) \sin\bigl((\alpha-z)K\bigr)
\notag
\\
& = 
- \int_0^1 dz \, B'(z)
\int_0^M \frac{dK}{K} \sin\bigl((\alpha-z)K\bigr)
\notag
\\
& = 
- \int_0^1 dz \, B'(z) \Si\bigl((\alpha-z)M\bigr)
\notag
\\
& = 
- \Si\bigl((\alpha-1)M\bigr)
- 
\int_0^1 dz \, B(z) \, \frac{\sin\bigl((\alpha-z)M\bigr)}{\alpha-z}
\notag
\\
& = 
- \Si\bigl((\alpha-1)M\bigr)
- B(\alpha) \int_0^1 dz \, \frac{\sin\bigl((\alpha-z)M\bigr)}{\alpha-z}
\notag
\\
& \hspace{3ex}+ 
\int_0^1 dz \, \frac{B(\alpha)-B(z)}{\alpha-z} \sin\bigl((\alpha-z)M\bigr)
\notag 
\\
& = 
\bigl(B(\alpha) -1 \bigr) \Si\bigl((\alpha-1)M\bigr)
- B(\alpha) \Si(\alpha M)
\notag
\\
& \hspace{3ex}
+ 
\int_0^1 dz \, \frac{B(\alpha)-B(z)}{\alpha-z} \sin\bigl((\alpha-z)M\bigr)
\,.
\label{eq:Ifunc-1}
\end{align}
The first equality in \eqref{eq:Ifunc-1} is a definition, 
the second equality comes by interchanging the integrals, 
justified by the absolute convergence of the double integral, 
and the third equality uses the definition of the 
sine integral function $\Si$~\cite{dlmf}. 
The fourth equality comes from integration by parts, the fifth 
equality by decomposing the integrand, 
and the sixth equality by using again the definition of~$\Si$. 
In the last expression in~\eqref{eq:Ifunc-1}, 
the integral term vanishes as $M\to\infty$ 
by the Riemann-Lebesgue lemma, and since $\Si(x)\to \pm \pi/2$ as $x\to\pm\infty$~\cite{dlmf},  
the other two terms show that $I(M,\alpha) \to -\pi B(\alpha) + \pi/2$ as $M\to\infty$. 
From this and \eqref{eq:Hfunc-impart-split} 
we obtain the imaginary part of~\eqref{eq:Hfunc-final}. 

Consider then $\Realpart H(\alpha)$. Taking the real part of \eqref{eq:Hfunc-def}
under the integral, we introduce both a large $K$ cutoff and a small $K$ cutoff and proceed as above, 
using now the cosine integrals $\Cin$ and $\Ci$~\cite{dlmf}. Removing the cutoffs with the help 
of the cosine integral identities \cite{dlmf} gives the real part of~\eqref{eq:Hfunc-final}. 

The smoothness of $H$ and the reality of $H(\alpha)$ for $\alpha\ge1$ are immediate from~\eqref{eq:Hfunc-final}. 
The series \eqref{eq:H-largearg} follows from \eqref{eq:Hfunc-final} by writing 
${(\alpha-z)}^{-1} = \alpha^{-1} \bigl(1-(z/\alpha)\bigr)^{-1}$ and using the geometric series. 
\end{proof}

\end{document}